\newtheorem{problem}{Problem}
\newtheorem{property}{Property}
\title{On the Hardness and Inapproximability of Recognizing Wheeler Graphs}
\titlerunning{On the Hardness and Inapproximability of Recognizing Wheeler Graphs}
\author{Daniel Gibney}{Department of Computer Science, University of Central Florida, Orlando FL, USA}{Daniel.Gibney@ucf.edu}{https://dblp.uni-trier.de/pers/hd/g/Gibney:Daniel}{}
\author{Sharma V. Thankachan}{Department of Computer Science, University of Central Florida, Orlando FL, USA}{sharma.thankachan@ucf.edu}{http://www.cs.ucf.edu/~sharma/}{}
\authorrunning{D. Gibney and S. Thankachan}
\subjclass{Track A: Algorithms, complexity and games}
\keywords{Burrows–Wheeler transform, string algorithms, suffix trees, 
NP-completeness}
\begin{document}

\maketitle

\begin{abstract}
In recent years several \emph{compressed indexes} based on variants of the Burrows-Wheeler transformation   have been introduced. 
Some of these  are used to index structures far more complex than a single string, as was originally done with the FM-index [Ferragina and Manzini, J. ACM 2005]. As such, there has been an increasing effort to better understand under which conditions such an indexing scheme is possible. This has led to the introduction of Wheeler graphs [Gagie {\it et al.}, Theor. Comput. Sci., 2017]. A Wheeler graph is a directed graph with edge labels which satisfies two simple axioms. Importantly, Wheeler graphs can be indexed in a way which is space efficient and allows for the fast traversal of edges. Gagie {\it et al}. showed that de Bruijn graphs, generalized compressed suffix arrays, and several other BWT related structures can be represented as Wheeler graphs. However, one may also wish to know if a given graph is a Wheeler graph. Here we answer the open question of whether or not there exists an efficient algorithm for recognizing if a graph is a Wheeler graph. We present the following results.
\begin{itemize}
      \item The problem of recognizing whether a given graph $G=(V,E)$
      is a Wheeler graph is NP-complete for any edge label alphabet of size $\sigma \geq 2$, even when $G$ is a DAG. We also highlight the relationship between queue number and Wheeler graphs, which reveals that the recognition problem can be solved in linear time for 
    $\sigma =1$;
    \item We demonstrate that even on a restricted, but useful, subset of graphs called $d$-NFA's the problem of Wheeler graph recognition is NP-complete for $d \geq 5$. This is in contrast to recent results demonstrating the problem can be solved in polynomial time for $d \leq 2$;
    \item We define an optimization variant of the problem called Wheeler Graph Violation, abbreviated WGV, where the aim is to remove the minimum number of edges required to obtain a Wheeler graph. We show WGV is APX-hard, even when $G$ is a DAG. Implying there exists a constant $C > 1$ for which there is no $C$-approximation algorithm (unless P = NP). Also, conditioned on the Unique Games Conjecture, for all $C \geq 1$, it is NP-hard to find a $C$-approximation;
    \item We define the Wheeler Subgraph problem, abbreviated WS, where the aim is to find the largest subgraph which is a Wheeler Graph (the dual of the WGV). In contrast to WGV, we prove that the WS problem is in APX for $\sigma=O(1)$; 
    \item For all of the above problems we present efficient exponential time exact algorithms, relying on graph isomorphism being computable in strictly sub-exponential time;
    \end{itemize}
The above findings suggest that most problems under this theme are computationally difficult. However, we identify a class of graphs for which the recognition problem is polynomial time solvable, raising the open question of which parameters determine this problem's difficulty. 
\vspace{-3mm}

\end{abstract}

\section{Introduction}
Within the last two decades, there has been the development of Burrows Wheeler Transform (BWT)~\cite{burrows1994block} based indices for compressing a diverse collection of data structures. This list includes labeled trees~\cite{siren2014indexing}, certain classes of graphs~\cite{DBLP:journals/jacm/FerraginaLMM09,DBLP:journals/almob/NovakGP17}, and sets of multiple strings~\cite{FerraginaV10,DBLP:conf/cpm/MantaciRRS05}. This has motivated the search for a set of general conditions under which a structure can be indexed by a BWT based index, and consequently the introduction of Wheeler graphs. A Wheeler graph is a directed graph with edge labels which satisfies two simple axioms related to the ordering of its vertices. They were introduced by Gagie {\it et al.}~\cite{DBLP:journals/tcs/GagieMS17} (also see~\cite{Tunneling-Arxiv}). Although not general enough to encompass all BWT-based structures (e.g.,~\cite{soda/0002ST17}), the authors demonstrated that Wheeler graphs offer a unified way of modeling several BWT based data structures such as 
de Bruijn graphs~\cite{BoweOSS12,de1946combinatorial}, 
generalized compressed suffix arrays~\cite{siren2014indexing}, multistring BWTs~\cite{DBLP:journals/tcs/MantaciRRS07}, XBWTs~\cite{DBLP:journals/jacm/FerraginaLMM09}, wavelet matrices~\cite{DBLP:journals/is/ClaudeNP15}, and certain types of finite automaton~\cite{AhoC75,Belazzougui10,HonKSTV13}. They also showed that there exists an encoding of a Wheeler graph $G=(V,E)$ which requires only $2(e+n) + e\log \sigma + \sigma\log e + o(n + e\log \sigma)$ bits where $\sigma$ is the size of the edge label alphabet, $e =|E|$, and $n =|V|$. This encoding allows for the efficient traversal of multiple edges while processing characters in a string, using an algorithm similar to the backward search in the FM-index~\cite{DBLP:journals/jacm/FerraginaM05}. Unfortunately, not all directed edge labeled graphs are Wheeler graphs, and thus not all directed edge labeled graphs allow for this encoding. The authors of \cite{DBLP:journals/tcs/GagieMS17} posed the question of \emph{how to reasonably recognize whether a given graph is a Wheeler graph}. 

The question is of both theoretical and practical value, as it might be the first step before attempting to apply some compression scheme. For example, one could use the existence of a {\it  Wheeler subgraph} to encode a graph. To do so, you could maintain an encoding of the subgraph using the framework in~\cite{DBLP:journals/tcs/GagieMS17} in addition to an adjacency list of the edges not included in the encoding. Depending on the size of the subgraph, such an encoding might provide a large space savings at the cost of a modest time trade-off while traversing the graph. This concept also motivates the portion of the paper where we look at \emph{optimization versions} of this problem that seek  subgraphs of the given graph which are Wheeler graphs. Unfortunately for practitioners of such a method, this problem turns out to be computationally intractable as well. As a positive result, recognizing that the problem presented by Wheeler graphs is similar to that of identifying the queue number of a graph provides some insight and indicates a class of graphs where the problem becomes computationally tractable.

\subsection{Wheeler Graphs}
We first give the definition of a Wheeler graph. The notation $(u,v,k)$ is used for the directed edge from $u$ to $v$ with label $k$. We will assume the usual ordering on the edge labels which come from an alphabet $\{1,2,...,\sigma\}$.
\begin{definition}
A Wheeler graph is a directed graph with edge labels where there exists an ordering $\pi$ on the vertices such that for any two edges $(u,v,k)$ and $(u',v',k')$:
\bigskip
\begin{enumerate}[(i)]
    \item $k < k' \implies v <_\pi v'$; 
    \item $(k = k')\land(u <_\pi u') \implies v \leq_\pi v'$.
\end{enumerate}
\bigskip
In addition, vertices with in-degree zero must be placed first in the ordering. 
\end{definition}

\begin{wrapfigure}{r}{0.5\textwidth}
\begin{center}
    \includegraphics[width=.5\textwidth]{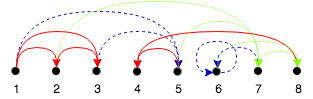}
    \caption{An example of a wheeler graph with $\sigma = 3$. The ordering on the edge labels is:\\red (solid) < blue (long-dash) < green (short-dash)}
    \label{fig:wheeler_graph}
\end{center}
\end{wrapfigure}
See Figure \ref{fig:wheeler_graph} for an illustration. 
One critical property of Wheeler graphs is called \emph{path coherence}. This property is characterized by the fact that if you start at any consecutive range of vertices under the ordering $\pi$, and traverse the graph following edge labels matching the characters in a string $P$, then when finished processing $P$ the vertices ended on will form a consecutive range. This property is key to allowing the efficient traversal of multiple edges simultaneously, as well as achieving a compressed representation of the graph. 
\begin{wrapfigure}[6]{r}{.5\textwidth}
\begin{center}
    \includegraphics[width=.5\textwidth]{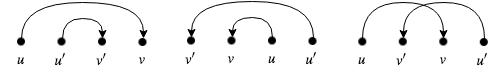}
    \caption{In a proper ordering all of the above configurations cannot occur.}
    \label{fig:forbidden}
\end{center}
\end{wrapfigure}
The following list of properties for a Wheeler graph can be deduced.

\begin{property}
All edges inbound to a vertex $v$ have the same edge label.
\end{property}

\begin{property}
For a given edge label $k$, the vertices which have $k$ as their inbound edge label are ordered consecutively in a proper ordering.
\end{property}

\begin{property}
It is possible for a vertex to a have multiple outbound edges with the same label. It is also possible for a vertex to have more then $\sigma$ inbound or outbound edges. 
\end{property}

\begin{property}
\label{property:forbidden}
We call two edges $(u,v,k)$ and $(u',v',k)$ with the same label a rainbow if $u < u'$ and $v' < v$. No rainbows can exist in a proper ordering (see Figure \ref{fig:forbidden}).
\end{property}

\begin{property}
\label{property:block_split}
Consider a proper ordering. Let $V_k$ refer to the consecutive set of vertices with the same inbound edge label $k$. We define two subsets of $V_k$ denoted $V_k^1$ and $V_k^2$ whose union is $V_k$. The set $V_k^1$ consists of all vertices $v$ with inbound edges that come from vertices with lower orderings than $v$, and the set $V_k^2$ consists of vertices $v$ with inbound edges coming from vertices with higher orderings than $v$. Then the intersection of $V_k^1$ and $V_k^2$ contains at most one vertex $u$ (one may not exist), and all of the vertices in $V_k^1 - \{ u\}$ are ordered lower than all of the vertices in $V_k^2$. Moreover, a vertex $v \in V_k^1$ cannot send an edge with label $k$ to a vertex with lower order than $v$, and vertex $v \in V_k^2$ cannot send an edge with label $k$ to a vertex of higher order than $v$. 
\end{property}

Due to Property \ref{property:block_split} and the fact that vertices with in-degree zero are placed first in a proper ordering, for $\sigma = 1$ any proper ordering is also a topological ordering (with the exception of vertices with self-loops which must be placed last). 

\subsection{Problem Definitions}
\label{sec:problem_def}
The first question we wish to answer here is given a directed graph with edge labels, does such an ordering $\pi$ exist? We define this problem formally as the following.
\begin{problem}[Wheeler Graph Recognition] \label{pro:1}
Given a directed edge labeled graph $G=(V,E)$, answer `YES' if $G$ is a Wheeler graph and `NO' otherwise. 
\end{problem}
Although we do not demand it here, ideally, a solution to the above problem would also return a proper ordering.

Motivated by the compression of general graphs (which are not necessarily Wheeler), 
we next define two optimization versions of Problem~\ref{pro:1} where we seek to find Wheeler subgraphs.
\begin{problem}[Wheeler Graph Violation  (WGV)]
Given a directed edge labeled graph $G=(V,E)$ identify the smallest  $E' \subseteq E$ such that $G' = (V,E \backslash E')$ is a Wheeler graph.
\end{problem}

We also consider the dual of this problem.

\begin{problem}[Wheeler Subgraph  (WS)]
Given a directed edge labeled graph $G=(V,E)$ identify the largest  $E'' \subseteq E$ such that $G'' = (V,E'')$ is a Wheeler graph.
\end{problem}
\subsection{Our Contribution}
\begin{itemize}
\item We first provide a proof that the Wheeler graph recognition problem is indeed a computationally hard problem. In Section \ref{sec:NPC}, we show that the problem of recognizing whether a given graph is a Wheeler graph is NP-complete even for an edge alphabet of size $\sigma = 2$. This is based on a reduction of the Betweenness problem to Wheeler graph recognition. The result holds even when the input is a directed acyclic graph (DAG).

\item In Appendix~\ref{appendix:queue_number} we show that for $\sigma = 1$ the recognition problem can be reduced to that of determining if a DAG has queue number one. This  can be solved in linear time \cite{DBLP:journals/siamcomp/HeathPT99}.
\item Section \ref{sec:d-NFA} we show Wheeler graph recognition remains NP-complete even when the number of edges leaving a vertex with the same label is at most five. This holds for DAGs as well.
This result is motivated by a recent work by Alanko, Policriti and Prezza~\cite{alanko2019prefixsorting} which identified that the recognition problem can be solved in polynomial time when the number of edges with the same label leaving a vertex is at most two.
\item Section \ref{sec:inapprox} examines the optimization version of this problem called Wheeler Graph Violation (WGV). We show via a reduction of the Minimum Feedback Arc Set problem that this problem is APX-hard, and assuming the Unique Games Conjecture, cannot be approximated within a constant factor. This also holds even when the  graph is a DAG.

\item In Section \ref{sec:MWS} we look at the dual of the minimization problem, the Wheeler Subgraph problem (WS). We show this problem is in the complexity class APX for $\sigma =O(1)$. We do this by demonstrating that we can obtain solutions whose value is  $\Omega(1/\sigma)$ times the optimal value.

\item In Section \ref{sec:exp_time} and Appendix \ref{appendix:exp_time} we provide an exponential time algorithm which solves the recognition problem on a graph $G = (V,E)$ in time $2^{O(n + e\log\sigma)}$ where $n = |V|$ and $e=|E|$. 
It uses the idea of enumerating through all possible encodings (of bounded size) of Wheeler graphs, and the fact that we can test whether there exists an isomorphism between two undirected graphs in sub-exponential time. 
This technique also gives us exact algorithms for the WGV and WS problems which run in time $2^{O(n+e\log\sigma)}$. 



\item 
In Appendix \ref{appendix:special_case}, using PQ-trees and ideas similar to those used in detecting if the queue number of a DAG is one, we demonstrate a class of graphs where Wheeler graph recognition can be done in linear time. 
\end{itemize}

\section{NP-completeness of Wheeler Graph Recognition}
\label{sec:NPC}
\begin{theorem}
\label{theorem:reduction1}
The Wheeler Graph Recognition Problem is NP-complete for any $\sigma \geq 2$.
\end{theorem}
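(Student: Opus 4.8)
The plan is to establish membership in \textsf{NP} first, then prove \textsf{NP}-hardness by a reduction from the \emph{Betweenness} problem, which is known to be \textsf{NP}-complete. For membership I would take the ordering $\pi$ itself as the certificate: given $\pi$, one checks axioms (i) and (ii) of the definition over all $O(e^2)$ ordered pairs of edges, and verifies that every in-degree-zero vertex precedes all others. Each individual check is $O(1)$, so verification runs in polynomial time and the problem is in \textsf{NP}.

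For hardness, recall that a Betweenness instance is a ground set $A$ together with a collection of triples $T \subseteq A^3$, asking whether there is a linear order $<$ on $A$ with $x<y<z$ or $z<y<x$ for every $(x,y,z)\in T$. The observation I would exploit is that for $\sigma = 2$ the axioms impose a rigid global shape on any proper ordering: since all inbound edges to a vertex share one label, axiom (i) forces every vertex with inbound label $1$ to precede every vertex with inbound label $2$, and the in-degree-zero vertices sit first. Thus any proper ordering is a block of sources, then a block $V_1$, then a block $V_2$, and within each block the same-label edges must avoid rainbows (Property~\ref{property:forbidden}) and respect the monotonicity of axiom (ii). I would represent each element $a\in A$ by an ``element vertex'' $v_a$ placed in $V_1$, feeding each $v_a$ from its own private source, so that a priori the order of the $v_a$ inside $V_1$ can realize \emph{any} permutation of $A$ (the source order is free and propagates monotonically to its targets). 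This induced order on $\{v_a\}$ is what encodes the sought linear order $<$.

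The heart of the reduction is a gadget $G_{(x,y,z)}$, attached for each triple, whose sole effect is to force $v_y$ strictly between $v_x$ and $v_z$ while leaving both orientations $v_x<v_y<v_z$ and $v_z<v_y<v_x$ available. This is the step I expect to be the main obstacle, because the Wheeler axioms natively express \emph{monotone} (non-crossing) constraints, whereas betweenness is the non-monotone ``median'' condition: I must forbid $v_y$ from being the minimum or the maximum of the three without ruling out either admissible orientation. My approach is to attach auxiliary source/sink vertices and same-label edges touching $v_x,v_y,v_z$ so that placing $v_y$ at an extreme necessarily creates a rainbow or violates the block-split structure of Property~\ref{property:block_split}, while every median placement can be completed rainbow-free. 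Care is needed to keep each gadget \emph{local}, so that distinct gadgets interact only through the shared element vertices and impose no hidden constraints on one another, and so that all auxiliary vertices have forced positions once the order of $\{v_a\}$ is fixed.

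With the gadget in hand, correctness follows in both directions: from a satisfying Betweenness order $<$ I would build $\pi$ by ordering $\{v_a\}$ accordingly, slot every auxiliary vertex into its forced position, and verify that no rainbows arise, certifying $G$ as a Wheeler graph; conversely, any proper ordering of $G$ induces an order on $\{v_a\}$ that, by the gadget property, satisfies every triple. I would orient all edges consistently (e.g.\ from sources toward $V_2$) so that $G$ contains no directed cycle, yielding the claim even when $G$ is a DAG. Finally, to lift the result from $\sigma=2$ to every $\sigma \geq 2$, I would pad the construction with the additional labels in a rigidly-forced configuration (a disjoint component using labels $3,\dots,\sigma$) that is Wheeler if and only if the $\sigma=2$ core is, so that recognizing the padded instance is equivalent to solving the original.
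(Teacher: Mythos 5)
Your overall architecture matches the paper's: membership in NP via the ordering as certificate, and hardness via a reduction from Betweenness with element vertices fed from sources and a per-triple gadget built from label-$2$ edges. However, there are two genuine gaps. First, you never actually construct the betweenness gadget; you describe the behavior it must have and explicitly defer it as ``the main obstacle.'' Since the entire difficulty of the reduction lives in that gadget, the proof is incomplete at its core. For reference, the paper's gadget is quite simple: for triple $(t_1,t_2,t_3)$ with element vertices $v_{i_1},v_{i_2},v_{i_3}$, create $w_1,w_2,w_3$ and add label-$2$ edges $(v_{i_1},w_1)$, $(v_{i_2},w_2)$, $(v_{i_3},w_3)$, $(v_{i_1},w_2)$, $(v_{i_3},w_2)$; the fact that $w_2$ receives edges from all three tails is what forces $v_{i_2}$ to lie between the other two, because any extremal placement of $v_{i_2}$ produces a crossing pair of label-$2$ edges violating axiom (ii), while both median orientations admit a non-crossing completion.

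Second, your plan to attach every gadget directly to a single shared set of element vertices does not work, and the caveat you raise (``care is needed to keep each gadget local'') is not a side issue but a fatal one. All gadget vertices $w_\ell^j$ carry inbound label $2$, so they occupy one consecutive block, and axiom (ii) couples label-$2$ edges \emph{across} gadgets: if the element vertices of two different triples interleave in the ordering, the monotonicity constraint on heads can force $w_2^j \le_\pi w_2^{j'}$ and $w_2^{j'} \le_\pi w_2^j$ simultaneously, which is impossible for distinct vertices. This would turn satisfiable Betweenness instances into non-Wheeler graphs, breaking the forward direction of the reduction. The paper's resolution is the part your sketch is missing: it duplicates the permutation into $k$ layers $v_i^1,\dots,v_i^k$ joined by label-$1$ chains $(v_i^j,v_i^{j+1},1)$, which force the same permutation on every layer, and attaches the gadget for triple $j$ only to layer $j$; cross-gadget interference is then impossible because the tails of different gadgets' label-$2$ edges occupy disjoint, non-interleaving layers. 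Without either this layering or some substitute for it, the reduction as you outline it is not correct.
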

We prove the NP-completeness of recognizing whether a graph is a Wheeler graph through a reduction of the Betweenness problem. This problem was established as NP-complete by  Opatrn\'y in 1979~\cite{DBLP:journals/siamcomp/Opatrny79}. Like our problem, it deals with finding a total ordering on a set of elements subject to some constraints.

\subsection{The Betweenness Problem}
The input to the Betweenness problem is a list of distinct elements $T =  t_1, \hdots, t_n $ and a collection of $k< n^3$ ordered triples of $(t_1^1, t_2^1, t_3^1), (t_1^2, t_2^2, t_3^2), \hdots (t_1^k, t_2^k, t_3^k)$ where every element in a triple is in $T$. The list $T$ should be placed into a total ordering with the property that for each of the given triples the middle item in the triple appears somewhere between the other two items. The items of each triple are not required to be consecutive in the total ordering. The decision problem is determining if such an ordering is possible.

As an example, consider the input $T = 1,2,3,4,5$ and the triples: $(3,4,5)$, $(4,1,3)$, $(1,4,5)$, $(2,4,1)$, $(5,2,3)$.
A total ordering which satisfies the given triples is the ordering $3, 1, 4, 2, 5$. The ordering $3, 1, 2, 4, 5$ does not since it violates the triple $(2,4,1)$.

\subsection{Reduction from Betweenness to Wheeler Graph Recognition}
Suppose we are given as input to the Betweenness Problem the list $t_1, t_2, \hdots, t_n$ and triples $(t_1^1, t_2^1, t_3^1), (t_1^2, t_2^2, t_3^2), \hdots,  (t_1^k, t_2^k, t_3^k)$.
We construct a graph of size $O(nk)$ as follows. 
Note that \emph{this graph is a DAG and all vertices are reachable from the source} vertex $v_0$.
\begin{wrapfigure}{r}{0.6\textwidth}
\begin{center}
     \vspace{1mm}
     \includegraphics[width=.6\textwidth]{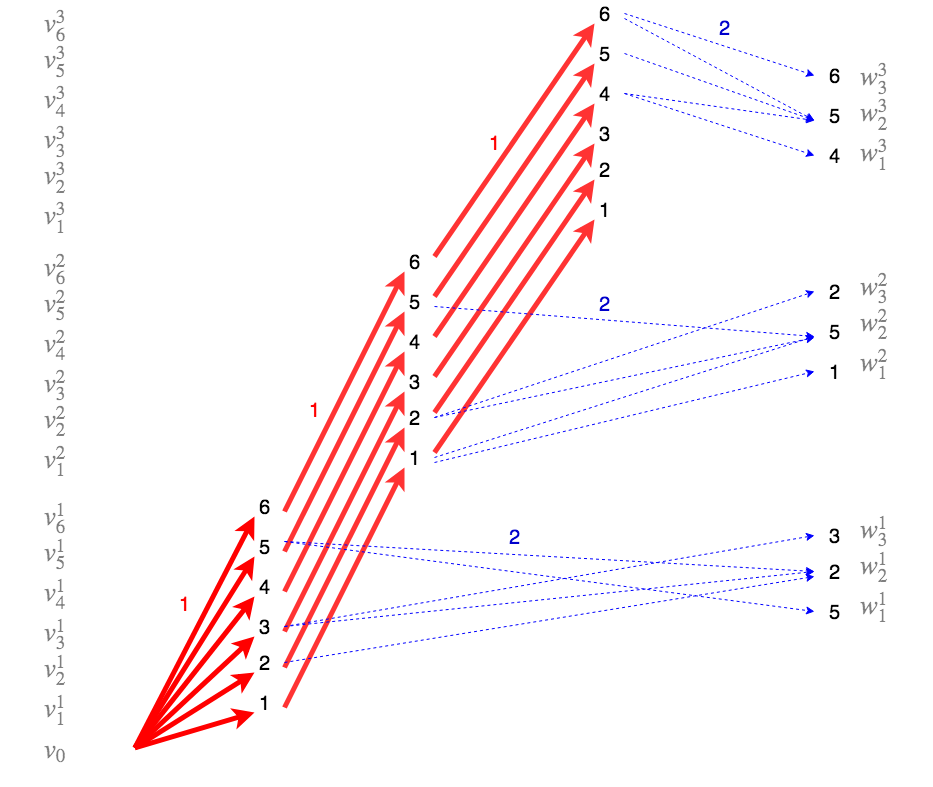}
     \caption{An example  of the reduction 
     with the input list $1,2,3,4,5,6$ and triples $(5,2,3),(1,5,2),(4,5,6)$.
     }
     \vspace{-2em}
     \label{fig:NPC_red}
 \end{center}
 \end{wrapfigure}
\begin{itemize}
    \item Create a source vertex $v_0$ and vertices $v_i^j$ for $1 \leq i \leq n$ and $1\leq j \leq k$.
    \item For each triple $(t_1^j, t_2^j, t_3^j)$ create a vertex for each element of the triple, we call them $w_1^j$, $w_2^j$, and $w_3^j$ respectively.
    \item Create the edges $(v_0, v_i^{1},1)$ and edges $(v_i^j, v_i^{j+1},1)$ for $1 \leq i \leq n, \ 1\leq j \leq k-1$.
    
    \item Create the following edges:
    \begin{itemize}
    \item  $(v_i^j, w_1^j, 2)$ if $t_i = t_1^j$
    \item  $(v_i^j, w_2^j, 2)$ if $t_i = t_2^j$ 
    \item $(v_i^j, w_3^j, 2)$ if $t_i = t_3^j$ 
    \item $(v_i^j, w_2^j, 2)$ if $t_i = t_1^j$ 
    \item $(v_i^j, w_2^j, 2)$ if $t_i = t_3^j$
    \end{itemize}
\end{itemize}

%
%

\noindent
Theorem \ref{theorem:reduction1} follows from Lemma~\ref{lem:3}.
\\
\begin{lemma}\label{lem:3}
An instance of the Betweenness problem has an ordering satisfying all of the constraints iff the graph constructed as above is a Wheeler graph. 
\end{lemma}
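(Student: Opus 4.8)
The plan is to prove both directions of the biconditional by analysing which vertex orderings the Wheeler axioms permit on the constructed graph, and then matching these orderings with the valid betweenness orderings. I would begin by recording the rigid part of any proper ordering $\pi$. The only in-degree-zero vertex is the source $v_0$, so it must come first. Every $v_i^j$ is the head of a label-$1$ edge while every $w_\ell^j$ is the head of a label-$2$ edge, so condition (i) of the definition forces all of the $v$-vertices to precede all of the $w$-vertices. Next, the label-$1$ edges $v_i^j \to v_i^{j+1}$ form $n$ vertex-disjoint chains sharing the common source $v_0$; applying Property~\ref{property:forbidden} (no rainbows) to consecutive levels shows that if $v_{i}^{j} <_\pi v_{i'}^{j}$ then $v_{i}^{j+1} <_\pi v_{i'}^{j+1}$, and conversely, so the left-to-right order of the chains is the \emph{same} permutation $\rho$ of $\{1,\dots,n\}$ at every level $j$. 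This $\rho$ is the candidate betweenness ordering.

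For the forward direction (Wheeler $\Rightarrow$ betweenness) I would fix a triple $j$ with chains $\alpha,\beta,\gamma$ (the elements $t_1^j,t_2^j,t_3^j$) and argue, assuming without loss of generality that $v_\alpha^j <_\pi v_\gamma^j$, that $\rho$ places $\beta$ strictly between $\alpha$ and $\gamma$. The recipe is repeated use of Property~\ref{property:forbidden} on the label-$2$ edges. Because $v_\gamma^j$ also sends an edge to $w_2^j$ and $v_\alpha^j$ sends one to both $w_1^j$ and $w_2^j$, comparing the pairs $(v_\alpha^j,w_1^j),(v_\gamma^j,w_2^j)$ and $(v_\alpha^j,w_2^j),(v_\gamma^j,w_3^j)$ forces $w_1^j <_\pi w_2^j <_\pi w_3^j$. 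Then, comparing $(v_\alpha^j,w_1^j)$ against $(v_\beta^j,w_2^j)$ forbids the rainbow that would arise from $v_\beta^j <_\pi v_\alpha^j$, giving $v_\alpha^j <_\pi v_\beta^j$; symmetrically $(v_\gamma^j,w_3^j)$ against $(v_\beta^j,w_2^j)$ gives $v_\beta^j <_\pi v_\gamma^j$. Hence $v_\alpha^j <_\pi v_\beta^j <_\pi v_\gamma^j$, i.e. $\rho$ satisfies the triple; since $\rho$ is independent of $j$, it satisfies all triples simultaneously.

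For the backward direction (betweenness $\Rightarrow$ Wheeler) I would take a satisfying betweenness ordering, use it as $\rho$, and exhibit an explicit $\pi$: place $v_0$ first, then the $v$-vertices ordered primarily by level $j$ and secondarily by $\rho$, then the $w$-vertices ordered primarily by triple index $j$ and, within a triple, as $w_1^j,w_2^j,w_3^j$ if $\alpha$ precedes $\gamma$ in $\rho$ and reversed otherwise. I would then verify conditions (i) and (ii) edge class by edge class: (i) holds since every $v$ precedes every $w$; for label-$1$ edges the ``level then $\rho$'' order is monotone along each chain and across levels; and for label-$2$ edges the within-triple check reduces to the three tails $v_\alpha^j,v_\beta^j,v_\gamma^j$ appearing in betweenness order against heads $w_1^j,w_2^j,w_3^j$, while the cross-triple check is immediate because higher-level tails and higher-index heads move together.

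The main obstacle is the forward direction: one must notice that the auxiliary edges into $w_2^j$ from all three chains are exactly what pins down the order of $w_1^j,w_2^j,w_3^j$, and then convert each outer-to-middle comparison into a betweenness inequality via the no-rainbow property. Establishing that the chain permutation $\rho$ is forced to be identical on every level (so that a single ordering of $T$ emerges) is the other point needing care; everything in the backward direction is then a routine but case-heavy verification. Finally, membership in NP is immediate (guess $\pi$ and check the axioms in polynomial time), which together with Lemma~\ref{lem:3} completes Theorem~\ref{theorem:reduction1}.
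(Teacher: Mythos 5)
Your proposal is correct and follows essentially the same route as the paper's proof in Appendix~\ref{appendix:proof}: an explicit level-by-level ordering for the betweenness-to-Wheeler direction, and a forced-rigidity argument (source first, permutation duplicated across levels via the label-$1$ edges, then Axiom (ii) applied to the five label-$2$ edges of each triple gadget) for the converse. The only difference is cosmetic: where the paper rules out the four non-betweenness orderings by exhibiting a violating edge pair in each case, you derive $w_1^j <_\pi w_2^j <_\pi w_3^j$ and then $v_\alpha^j <_\pi v_\beta^j <_\pi v_\gamma^j$ directly, a slightly tidier packaging of the same use of the no-rainbow condition.
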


\begin{proof}
(Sketch) The intuition is that the vertices with inbound edge label one represent the permutation of the elements in $T$. The edges labeled one force the permutation to be duplicated $k$ times, once for each constraint. The vertices with the inbound edge label two represent the elements in each triple. The edges with label two enforce that the only valid orderings of the vertices representing elements in $T$ are orderings that satisfy the Betweenness constraints. This is enforced by having no edges labeled two which are crossing in the figure. The detailed proof is deferred to Appendix~\ref{appendix:proof}.
\end{proof}

The fact that being a Wheeler graph implies (arched) level planarity with respect to each edge label is the key to the reduction.

The Wheeler graph recognition problem can be solved in linear time for an alphabet of size one. This follows from relating the notion of queue number to Wheeler graphs, and a previous result giving a linear time algorithm for finding a one-queue DAG~\cite{DBLP:journals/siamcomp/HeathP99, DBLP:journals/siamcomp/HeathPT99,DBLP:journals/siamcomp/HeathR92}. This also gives an upper bound on the number of edges which can be in a Wheeler graph~\cite{DBLP:journals/dmtcs/DujmovicW04}. Detailed proofs are deferred to Appendix \ref{appendix:queue_number}.

\begin{theorem}
\label{theorem:linear_sigma_1}
The Wheeler graph recognition problem can be solved in linear time for an edge alphabet of size $\sigma = 1$.
\end{theorem}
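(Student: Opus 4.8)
The plan is to specialize the two Wheeler axioms to the case $\sigma = 1$ and then recognize the resulting condition as a single-queue layout of a DAG, for which a linear-time algorithm is already known.

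First I would simplify the axioms. When every edge carries the same label, axiom (i) is vacuous (no two labels satisfy $k < k'$), and axiom (ii) collapses to the single requirement that for all edges $(u,v)$ and $(u',v')$ we have $u <_\pi u' \implies v \le_\pi v'$; by Property \ref{property:forbidden} this says exactly that the ordering contains no rainbow. Using the consequence of Property \ref{property:block_split} noted in the text, namely that for $\sigma = 1$ every proper ordering is a topological ordering with self-loop vertices forced last, the recognition problem reduces to the following: after deleting self-loops, decide whether the (necessarily acyclic) graph $G$ admits a topological vertex ordering in which no two edges are nested and in which all in-degree-zero vertices appear first. As preprocessing I would verify that $G$ minus its self-loops is acyclic (otherwise output `NO'), and record the self-loop vertices to be appended at the end.

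The key observation is that ``a topological ordering with no two nested edges'' is precisely a layout of $G$ in a single queue, i.e.\ a witness that $G$ has queue number one: in a topological order every edge $(u,v)$ satisfies $u <_\pi v$, and two edges $(u,v),(u',v')$ nest exactly when $u <_\pi u' <_\pi v' <_\pi v$, which is the forbidden rainbow. Hence, up to the in-degree-zero constraint and self-loops, $G$ is a Wheeler graph if and only if $G$ is a one-queue DAG, and I would invoke the linear-time recognition of one-queue DAGs of Heath, Pemmaraju and Trickey \cite{DBLP:journals/siamcomp/HeathPT99} (see also \cite{DBLP:journals/siamcomp/HeathR92,DBLP:journals/siamcomp/HeathP99}). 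Because one-queue graphs have only $O(n)$ edges \cite{DBLP:journals/dmtcs/DujmovicW04}, we may assume $e = O(n)$, so the whole procedure runs in linear time.

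The hard part is that the reduction is not literally to plain one-queue recognition: the Wheeler definition additionally demands that all in-degree-zero vertices be placed first, and this is a genuine extra constraint. Indeed one can exhibit small DAGs that admit a one-queue layout but admit none in which every source precedes every non-source, so the sources-first requirement cannot simply be dropped. The crux of the proof is therefore to enforce it within the queue-layout framework without losing linearity, either by a structure-preserving transformation that pins the set of sources to the first level of the underlying (arched) leveled-planar structure, or by adapting the incremental leveling used in the one-queue DAG algorithm so that its initial level is forced to be exactly the in-degree-zero vertices. Once this structural lemma is in place, correctness follows in both directions from the simplification above, and the remaining bookkeeping (self-loops, multiple sources, re-appending) is routine.
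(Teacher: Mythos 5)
Your specialization of the axioms for $\sigma=1$ and the identification with one-queue layouts is exactly the paper's route, and you are right that the sources-first requirement is a genuine extra constraint (one can build a one-queue DAG, e.g.\ sources $s,t$ with edges $s\to a\to b\to c$ and $t\to c$, that admits no one-queue layout placing both sources first). But you stop precisely at the step the paper treats as the only nontrivial one: you name the needed ``structural lemma'' --- forcing the in-degree-zero vertices into an initial segment --- and offer two candidate strategies without carrying either out. As written, that is a gap, not a proof.

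The paper closes it with a one-line gadget rather than by re-engineering the Heath--Pemmaraju algorithm: add a single new vertex $u$ together with an edge from $u$ to every vertex of $V_0$ (the set of in-degree-zero vertices of $G$ after discarding self-loops), and run plain one-queue recognition on the augmented DAG. In any one-queue ordering, which is necessarily topological, $u$ comes first; and if some $w\notin V_0$ preceded some $v\in V_0$, then $w$ has an inbound edge $(x,w)$ with $x\neq u$ (since $u$'s out-neighbours all lie in $V_0$), so $u<_\pi x$ and $w<_\pi v$, and the pair $(u,v)$, $(x,w)$ is a rainbow, contradicting Property~\ref{property:forbidden}. Hence every one-queue ordering of the augmented graph automatically places $V_0$ first, and its restriction to $V$ is a Wheeler ordering; conversely, prepending $u$ to a Wheeler ordering yields a one-queue ordering, since an edge $(u,v)$ with $v\in V_0$ can only nest over an edge whose head precedes $v$, and no vertex with positive in-degree precedes $V_0$. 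Supplying this gadget (or an equivalent explicit argument) is what your write-up is missing; the surrounding bookkeeping --- acyclicity check, self-loops appended last, and $e=O(n)$ for linearity --- is fine as you have it.
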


\begin{theorem}
\label{theorem:num_edges}
For $\sigma = 1$, the number of edges in a Wheeler graph is $\Theta(n)$.
\end{theorem}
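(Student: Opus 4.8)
The statement splits into a lower bound and an upper bound; the lower bound is immediate and the upper bound carries the content. For the lower bound, the directed path $v_1 \to v_2 \to \cdots \to v_n$ is a Wheeler graph when $\sigma = 1$: ordering the vertices along the path is both a topological order and a valid Wheeler ordering (axiom (i) is vacuous, axiom (ii) holds since the unique edge leaving $v_i$ goes to $v_{i+1}$ and these targets increase with $i$, and the in-degree-zero vertex $v_1$ comes first). This exhibits a family with $n-1 = \Omega(n)$ edges, so it remains to prove that every $\sigma = 1$ Wheeler graph has $O(n)$ edges.

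For the upper bound I would argue directly from axiom (ii), which for a single label reads: if $u <_\pi u'$ then every out-neighbor of $u$ precedes or equals every out-neighbor of $u'$. Concretely, write the vertices as $1,\dots,n$ in the order $\pi$, let $u_1 < \cdots < u_m$ be the vertices of positive out-degree, and for each $i$ let $[a_i,b_i]$ be the smallest interval of positions containing the out-neighborhood $N^+(u_i)$, so $a_i=\min_\pi N^+(u_i)$ and $b_i=\max_\pi N^+(u_i)$. Applying axiom (ii) to the edges realizing $b_i$ and $a_{i+1}$ gives $b_i \le a_{i+1}$, i.e. the intervals $[a_i,b_i]$ are pairwise non-overlapping apart from possibly sharing endpoints. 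Since $|N^+(u_i)| \le b_i - a_i + 1$, the total number of edges is
\[
\sum_{i=1}^m |N^+(u_i)| \;\le\; \sum_{i=1}^m (b_i - a_i) + m \;\le\; (b_m - a_1) + m \;\le\; (n-1) + n \;=\; 2n-1,
\]
where the middle inequality is a telescoping estimate using $b_i \le a_{i+1}$. Hence $|E| \le 2n-1 = O(n)$, and together with the lower bound this gives $|E| = \Theta(n)$.

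The argument above is self-contained, but I would also note that it is exactly the Wheeler-graph instance of the known edge bound for one-queue layouts: the relationship between $\sigma=1$ Wheeler graphs and one-queue DAGs established for Theorem~\ref{theorem:linear_sigma_1} lets one instead invoke the $2n-3$ bound of Dujmovi\'c and Wood~\cite{DBLP:journals/dmtcs/DujmovicW04} (equivalently Heath and Rosenberg) as a black box.

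I expect the only delicate points to be bookkeeping rather than ideas: making the telescoping precise when consecutive out-neighborhood intervals share an endpoint (so an endpoint vertex is not double counted as a target), correctly handling self-loops (which the remark following Property~\ref{property:block_split} forces to the end of the ordering but which are still single edges contributing one to the count), and confirming that the set-of-triples convention rules out parallel edges so that $|N^+(u_i)|$ is genuinely a count of distinct vertices. None of these affects the linear order of magnitude, so the main step --- the monotone, almost-disjoint structure of the out-neighborhoods forced by axiom (ii) --- is where the bound really comes from.
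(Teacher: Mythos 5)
Your proof is correct, but it takes a different route from the paper's. The paper's proof of Theorem~\ref{theorem:num_edges} is a two-line corollary of the machinery already set up in Appendix~\ref{appendix:queue_number}: ignoring self-loops, every $\sigma=1$ Wheeler graph is a one-queue DAG, so the $\Theta(n)$ edge bound of Dujmovi\'c and Wood~\cite{DBLP:journals/dmtcs/DujmovicW04} applies, and self-loops add at most $n$ further edges. That is exactly the ``black box'' alternative you mention at the end. Your main argument instead proves the upper bound from scratch: axiom (ii) with a single label forces the out-neighborhood intervals $[a_i,b_i]$ of the positive-out-degree vertices (in $\pi$-order) to satisfy $b_i\le a_{i+1}$, and the telescoping sum gives the explicit bound $|E|\le 2n-1$; you also supply the directed path as an explicit $\Omega(n)$ witness, which the paper leaves implicit. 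What your approach buys is self-containedness and an explicit constant without any appeal to the queue-layout literature; it is essentially a direct re-derivation of the one-queue (``rainbow-free'') edge bound specialized to Wheeler orderings. What the paper's approach buys is brevity and thematic consistency, since the equivalence with one-queue DAGs is already established for Theorem~\ref{theorem:linear_sigma_1}. The caveats you flag (shared interval endpoints, self-loops, parallel edges under the triple convention) are genuinely only bookkeeping and none of them threatens the linear bound.
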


\section{NP-completeness of Wheeler Graph Recognition on $d$-NFA's}
\label{sec:d-NFA}
This section concerns recognizing whether $d$-NFA is also a Wheeler graph. A $d$-NFA is defined as follows:
\begin{definition}
A $d$-NFA $G$ is an NFA where the number of edges with the same character leaving a vertex is at most $d$. We refer to the value $d$ as the non-determinism of $G$.
\end{definition}
We emphasize that here a NFA contains a single start state, from which we assume each vertex is reachable. The results in this section are in contrast to the recent work of Alanko, Policriti and Prezza who showed that it can recognized in polynomial time whether a $2$-NFA is a Wheeler graph~\cite{alanko2019prefixsorting}. Their result coupled with the observation that the reduction in Section \ref{sec:NPC} requires a $n^{\Theta(1)}$-NFA suggests an interesting question about what role non-determinism plays in the tractability of Wheeler graph recognition. To this end, we prove Theorem \ref{theorem:7_NFA}.

\begin{theorem}
\label{theorem:7_NFA}
The Wheeler Graph Recognition Problem is NP-complete for $d$-NFA's, $d \geq 5$.
\end{theorem}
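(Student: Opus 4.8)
Membership in NP is immediate: a proper vertex ordering $\pi$ is a certificate of polynomial size, and axioms (i) and (ii) together with the in-degree-zero condition can be verified for all pairs of edges in time polynomial in $|V|+|E|$. A $d$-NFA is only a special case, so this is unaffected. All of the work is therefore in establishing NP-hardness for a \emph{fixed constant} non-determinism. I would adapt the reduction from Betweenness used for Theorem~\ref{theorem:reduction1}, since the graph built there is already a single-source DAG in which every vertex is reachable from $v_0$ and every vertex emits at most two edges of label $2$ (a first- or third-element vertex sends edges to $w_1^j,w_2^j$ or $w_3^j,w_2^j$, a middle-element vertex only to $w_2^j$). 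The \emph{only} place where the non-determinism is unbounded is the source $v_0$, which fans out to the column-start vertices $v_1^1,\dots,v_n^1$ with $n$ edges of label $1$. The whole task thus reduces to replacing this single fan-out by a bounded-degree gadget.

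The reason the fan-out works in its unbounded form is that all $n$ of these edges share the \emph{same} source, so Property~\ref{property:forbidden} (no rainbows) and axiom (ii) impose \emph{no} constraint among their heads, leaving $v_1^1,\dots,v_n^1$ free to take any relative order — exactly the freedom needed to range over all permutations of $T$ and hence over all candidate Betweenness orderings. This freedom is essential: if the gadget realized only a restricted family $S$ of column-orders, a satisfiable instance whose only witnesses lie outside $S$ would be wrongly rejected, so correctness of the forward direction forces $S$ to be \emph{all} permutations. The naive fix — feeding the columns through a bounded-degree tree of label-$1$ edges — fails precisely here: by axiom (ii) the heads of a lower parent must all precede the heads of a higher parent, so the achievable leaf-orders are only the permutations respecting the tree's recursive block structure (the \emph{separable} permutations), not all of $S_n$.

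The plan is therefore to build a bounded out-degree \emph{routing gadget} that still realizes every permutation of its outputs. The key leverage is that axiom (ii) forbids crossings only between edges of the \emph{same} label, so crossings can be reintroduced by spending auxiliary labels (recall that $\sigma$ is unrestricted in this theorem). I would lay out $O(\log n)$ layers of constant-size switch gadgets, in the spirit of a permutation (sorting) network, where each switch either passes or swaps its two inputs and the choice is encoded by which label-class its edges fall into, so that a given global order of the outputs is achievable exactly when some consistent switch setting routes the inputs to that order. Since a permutation network realizes every permutation, the set of realizable column-orders becomes all of $S_n$. The gadget is acyclic, its source reaches every vertex, and it can be arranged so that the busiest vertex emits at most five edges of any single label; this is where the constant $5$ enters, and it explains why the argument does not reach the $d\le 2$ regime resolved by Alanko, Policriti and Prezza~\cite{alanko2019prefixsorting}, leaving $d\in\{3,4\}$ open.

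With the gadget in place, the equivalence would follow the lines of Lemma~\ref{lem:3}, using that being a Wheeler graph is equivalent to arched level planarity (rainbow-freeness, Property~\ref{property:forbidden}) within each label class. For the ($\Leftarrow$) direction, given a satisfying Betweenness order I would set the switches to route the columns into that order and extend to a full proper ordering as in the original proof, checking axioms (i)--(ii) layer by layer. For the ($\Rightarrow$) direction, any proper ordering of the $d$-NFA induces some column-order through the gadget, and the rainbow-free placement of the label-$2$ triple edges forces that order to satisfy every triple. I expect the gadget itself to be the main obstacle: one must show simultaneously that it is \emph{surjective} onto all column-orders (so no satisfiable instance is lost) and that it adds \emph{no spurious} freedom or constraint that would let a non-satisfiable instance pass, all while keeping every same-label out-degree at most five and the whole graph an acyclic single-source NFA. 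Propagating the Wheeler axioms cleanly through $O(\log n)$ layers of switches, and bookkeeping the interaction between the routing labels and the label-$2$ constraint edges, is the delicate part.
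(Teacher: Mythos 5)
Your NP-membership paragraph is fine, and you have correctly located the sole obstacle in the Betweenness reduction (the unbounded label-$1$ fan-out at $v_0$) and correctly argued that a bounded-degree tree only realizes block-recursive permutations. But the routing gadget you then propose cannot exist, and the reason is structural, not a matter of delicate bookkeeping. First, auxiliary labels buy you nothing: every vertex has all inbound edges of a single label (Property~1), and axiom~(i) totally orders the label classes, so the $n$ column-start vertices must all share one inbound label and any ``crossing'' realized with a second label would fix, not free, their relative order. Second, within one label class, applying axiom~(ii) to edges $(p,x,k)$ and $(q,y,k)$ shows that $x<_\pi y$ forces \emph{every} label-$k$ parent of $x$ to be $\le_\pi$ \emph{every} label-$k$ parent of $y$. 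Fixing one parent $p(\cdot)$ per output, the map $x\mapsto p(x)$ is therefore order-reflecting with fibers of size at most $d$, and each fiber is contiguous in every proper ordering. Since the gadget is a single-source DAG (the NFA has one start state), recursing upward must eventually reach a level with fewer than $n$ vertices, at which point some fiber of size between $2$ and $d\le n-1$ is forced to be contiguous forever after; any ordering interleaving that fiber with an outside vertex is unachievable. In particular your two-input switch is impossible: its two outputs are freely swappable only if they hang off a \emph{single common} parent, so every ``switch'' has one input and can never cross two distinct wires, and the network degenerates into a tree realizing only the tree's block permutations. Since you yourself observe that the forward direction of the Betweenness reduction needs \emph{all} of $S_n$, no constant-$d$ gadget can rescue that reduction; the claim that the busiest vertex can be arranged to emit at most five same-label edges is asserted where the real difficulty lies.

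The paper's proof avoids demanding full permutation freedom altogether. It reduces from 4-NAESAT, first rewritten as a 3-NAESAT instance in which every variable occurring in the middle of a clause appears at most twice, and builds a ``menorah'' scaffold in which each pair $x_i,\overline{x_i}$ are exactly the two label-$1$ children of a single common parent $s_i^{n-i}$ --- so the only residual freedom is an independent binary swap per variable, i.e., a truth assignment, which is precisely what the contiguity obstruction above permits. Betweenness-style layers (with constraints $(x_i,X,\overline{x_i})$, $(a_k,Z_k,b_k)$ and $(c_k,X,Z_k)$) then force every clause to receive a not-all-equal assignment. The constant $5$ is an accounting consequence of the construction: $s_i^0$ emits three scaffold edges plus at most two clause-chain edges, the latter bounded because middle variables occur at most twice. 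The lesson worth internalizing is that bounded same-label out-degree confines the achievable orderings of any same-label block to a fixed hierarchical block structure with blocks of size at most $d$, so the source problem must have a solution space that is a product of constant-size local choices rather than an arbitrary permutation.
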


The strategy of the proof is to reduce the NP-complete problem 4-NAESAT to Wheeler Graph Recognition. In 4-NAESAT each clause is of length 4, and an expression is satisfiable iff there exists a truth assignment such that each clause contains both a true literal and a false literal. Our reduction has a useful property highlighted by Lemma \ref{lem:3_NAESAT}.
\begin{lemma}
\label{lem:3_NAESAT}
An instance $\phi$ of 4-NAESAT can reduced in poly-time to an instance $\phi'$ of 3-NAESAT where a variable occurring in the middle of a clause appears at most twice in $\phi'$.
\end{lemma}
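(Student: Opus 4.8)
The plan is to use the classical arity-reduction gadget for NAE-SAT, but arranged so that the auxiliary variables always occupy the central position of the clauses they are placed in. Concretely, I would process each length-$4$ clause $C=(\ell_1,\ell_2,\ell_3,\ell_4)$ of $\phi$ independently: introduce one fresh variable $y_C$, delete $C$, and add the two length-$3$ clauses $(\ell_1,\, y_C,\, \ell_2)$ and $(\ell_3,\, \bar{y}_C,\, \ell_4)$, written so that $y_C$ and $\bar{y}_C$ sit in the middle slot. Letting $\phi'$ be the conjunction of all clauses produced this way gives a 3-NAESAT instance with $2k$ clauses and $n+k$ variables, so the reduction is plainly polynomial time.

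The heart of the argument is showing that this transformation preserves not-all-equal satisfiability. Because each $y_C$ is private to the two clauses coming from $C$, it suffices to prove the equivalence one clause at a time: an assignment to $\ell_1,\dots,\ell_4$ extends to a choice of $y_C$ that NAE-satisfies both new clauses if and only if $\ell_1,\dots,\ell_4$ are not all equal. I would establish the forward direction by a short case split according to whether $\ell_1=\ell_2$ and whether $\ell_3=\ell_4$; in each case the symmetry of the NAE predicate pins down a forced (and consistent) value for $y_C$. For the converse, if the four literals are all equal, then the first clause forces $y_C$ to the opposite value while the second clause, through the negation $\bar{y}_C$, forces it back, so no consistent choice of $y_C$ exists. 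This case analysis is the one place where correctness can break, so it is where I would be most careful; in particular I would verify that the negation on $y_C$ in the second clause is essential, since using $y_C$ unnegated in both clauses would wrongly accept the all-equal assignment.

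Finally, the middle-occurrence property falls out of the construction by inspection. Every clause of $\phi'$ is one of the two clauses generated from some $C$, and by design its central literal is $y_C$ or $\bar{y}_C$; hence the only variables that ever appear in a middle position are the fresh variables. Each $y_C$ occurs in exactly the two clauses derived from $C$ and nowhere else, i.e.\ exactly twice in $\phi'$, which meets the ``at most twice'' requirement. The original variables of $\phi$ may still occur arbitrarily often, but they only ever sit in the two outer slots, so they are exempt from the condition. I expect the main obstacle to be purely the gadget-correctness case analysis above; the placement of $y_C$ in the middle is itself cost-free precisely because the NAE predicate is invariant under permuting the literals of a clause, so no extra machinery (such as equality-chain gadgets to split high-occurrence original variables) is needed.
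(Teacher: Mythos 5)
Your proposal is correct and uses exactly the same gadget as the paper: each 4-clause $(a,b,c,d)$ is replaced by $(a,w,b)$ and $(c,\overline{w},d)$ with a fresh variable $w$ in the middle slot, and NAE-satisfiability is preserved precisely because a consistent choice of $w$ exists iff the four literals are not all equal. The only difference is that you spell out the case analysis that the paper asserts in one line, so no further changes are needed.
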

\begin{proof}
Convert the 4-NAESAT instance $\phi$ to a 3-NAESAT instance $\phi'$ by converting each clause $(a_k, b_k, c_k, d_k)$ into the clauses $(a_k, w_k, b_k)$ and $(c_k, \overline{w_k}, d_k)$. Both clauses have a satisfying not-all-equal assignment iff it is not the case that $a_k = b_k = c_k = d_k$. We note that the variable used in the middle of the clauses, $w_k$, is used only twice in $\phi'$.
\end{proof}

For convenience, we define a case of 3-NAESAT where each variable occuring in the middle occurs at most twice, we call this 3-NAESAT$^*$. We next describe the construction of a one source DAG from an instance of 3-NAESAT$^*$.
 
Suppose we are given an instance $\phi$ of 3-NAESAT$^*$ with variables $x_1, x_2, \hdots ,x_n$ and the clauses $(a_k, b_k, c_k)$ where we assume $a_k$, $b_k$, $c_k$ can represent either a Boolean variable or its negation. 
We create a single source DAG $G$ based on $\phi$. The first step creates a \emph{menorah like}
structure which allows for the vertices representing $x_i$ and $\overline{x_i}$ to swap places in $G$, but otherwise fixes the positions of the vertices. We begin by adding the vertices which represent our variables, $x_1, \hdots, x_n, X, \overline{x_1}, \hdots, \overline{x_n}$; (the role of $X$ will become clear). Next, we add the structure to constrain their possible positions. 

\begin{wrapfigure}{r}{0.6\textwidth}
\begin{center}
    \vspace{-.5em}
     \includegraphics[width=.6\textwidth]{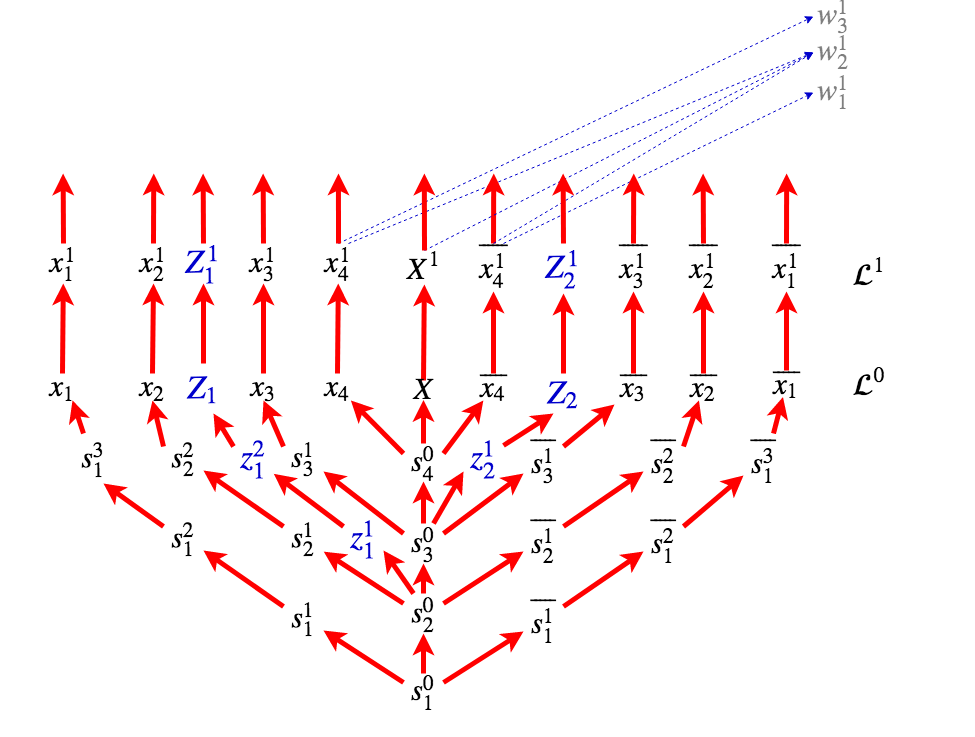}
     \caption{The vertices $Z_1$ and $Z_2$ could be for example for the clauses $(x_1, x_2, x_3), (x_2, \overline{x_3}, x_4)$. For each `betweenness' constraint we add a new layer and enforce the constraint as was done in Section \ref{sec:NPC}. The constraint shown is $(x_4,X,\overline{x_4})$.
     }
     \vspace{-5em}
     \label{fig:d-NFA}
 \end{center}
 \end{wrapfigure}
\noindent 
Add to $G$ the vertices:
\vspace{-.1em}
\begin{itemize}
\setlength\itemsep{.5em}
    \item $s_1^0 \hdots, s_n^0$;
    \item For $1 \leq i \leq n-1$, $1 \leq j \leq n-i$:
    \begin{itemize}
    \setlength\itemsep{.5em}
        \item $s_i^j$ and $\overline{s_i^j}$;
    \end{itemize}
\end{itemize} 
Add to $G$ the red edges:
\begin{itemize}
\setlength\itemsep{.5em}
    \item $(s_1^0, s_2^0, 1), \hdots (s_n^0, X, 1)$;
    \item For $1 \leq i \leq n-1$, $1 \leq j \leq n-i$:
    \begin{itemize}
    \setlength\itemsep{.5em}
        \item $(s_i^{j-1}, s_i^{j},1)$ and $(\overline{s_i^{j-1}}, \overline{s_i^j},1)$;
    \end{itemize}
    \item For $1 \leq i \leq n$:
    \begin{itemize}
        \item $(s_i^{n-i}, x_i, 1)$ and $(\overline{s_i^{n-i}}, \overline{x_i}, 1)$;
    \end{itemize}
\end{itemize}

For clause $k$, denoted $(a_k, b_k, c_k)$, we add a vertex $Z_k$. Suppose the middle variable of the clause, $b_k$, is $x_h$ (positive or negated), then we add the vertices $z_k^j$ for $1 \leq j \leq n-h$, and red edges $(s_h^0, z_k^1, 1), (z_k^1, z_k^2, 1)$\ $\hdots (z_k^{n-h}, Z_k, 1)$.

Now we wish add a set of \emph{betweenness} type constraints on any proper ordering given of the vertices $\mathcal{L}^0 = \{x_1, \hdots, X, \overline{x_n} \hdots \overline{x_1}, Z_1, Z_2, \hdots\}$. Given a constraint $(y_1,y_2,y_3)$ we insist $y_2$ be between $y_1$ and $y_2$ in the ordering. This can be enforced by adding a layer of new vertices $\mathcal{L}^1 = \{x_1^1, \hdots, X^1, \overline{x_n^1} \hdots \overline{x_1^1}, Z_1^1, Z_2^1, \hdots\}$ with red edges labeled 1 from vertices in layer $\mathcal{L}^0$ to their corresponding vertices in $\mathcal{L}^1$. We use the same gadget that was used in Section \ref{sec:NPC}. Consider adding a betweenness  on the vertices $y_1$, $y_2$, $y_3$ in $\mathcal{L}^0$. Add the vertices $w_1^1$, $w_2^1$, and $w_3^1$ and the blue edges $(y^1_1, w_1^1, 2)$, $(y^1_2, w_2^1, 2)$, $(y^1_3, w_3^1, 2)$, $(y^1_1, w_2^1, 2)$ and $(y^1_3, w_2^1, 2)$. Additional betweenness constraints can be similarly enforced by adding a new layer on top of $\mathcal{L}^1$ with a new gadget. Add the betweenness constraints $(x_i, X, \overline{x_i})$ for $1 \leq i \leq n$ fixing $X$, and betweenness constraints $(a_k, Z_k, b_k)$ and $(c_k, X, Z_k)$ for every clause $(a_k, b_k, c_k)$.

Before proving the correctness of the reduction, we make the observation that because any variable occurring in the middle of a clause occurs as most 
{\color{red}\bf twice} in the whole expression, the maximum number of edges leaving a vertex $s_i^0$ is bounded by $3+ {\color{red}\bf 2} =5$. All of the other vertices have at most three edges with the same label leaving them.

\begin{lemma}
The leveled graph $G$ constructed as above from an instance $\phi'$ of 3-NAESAT$^*$ is a Wheeler graph iff $\phi'$ is satisfiable.
\end{lemma}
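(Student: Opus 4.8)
\section*{Proof proposal}

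The plan is to show that $G$ admits a Wheeler ordering precisely when the betweenness constraints imposed by the gadget layers can all be met by a linear order of the top level $\mathcal{L}^0=\{x_1,\dots,X,\dots,\overline{x_1},Z_1,\dots\}$ that is compatible with the label-$1$ (menorah) structure, and then to argue that such an order exists iff $\phi'$ has a not-all-equal assignment. As in Lemma~\ref{lem:3}, the two facts doing the work are: (a) all edges of the menorah and of the connecting layers carry label $1$ and form a leveled graph, so by the level-planarity forced by Property~\ref{property:forbidden} every vertex of a given level is ordered consecutively, and in particular the level-$n$ block $\mathcal{L}^0$ is consecutive and its order propagates rigidly up and down the label-$1$ chains; (b) each betweenness gadget, built exactly as in Section~\ref{sec:NPC} with label-$2$ edges, is realizable without a rainbow (Property~\ref{property:forbidden}) iff its middle element lies between the two outer elements in the $\mathcal{L}^0$ order. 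Fixing $X$ as a reference, I read off a truth assignment from any admissible order by declaring a literal-vertex $\ell$ to be TRUE iff $\ell$ precedes $X$; the constraints $(x_i,X,\overline{x_i})$ guarantee that $x_i$ and $\overline{x_i}$ fall on opposite sides of $X$, so this is a consistent assignment.

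The heart of the correctness is the claim that, for a clause $(a_k,b_k,c_k)$, the pair of constraints $(a_k,Z_k,b_k)$ and $(c_k,X,Z_k)$ can be satisfied simultaneously by some placement of $Z_k$ iff $a_k,b_k,c_k$ do not all lie on the same side of $X$, i.e.\ iff the clause is not-all-equal satisfied. For the ``only if'' direction I argue by contradiction: if $a_k,b_k,c_k$ were all on, say, the TRUE side ($<X$), then $(c_k,X,Z_k)$ forces $Z_k$ onto the FALSE side ($>X$), while $(a_k,Z_k,b_k)$ forces $Z_k$ into the interval spanned by $a_k,b_k$, which is contained in the TRUE side---impossible; the all-FALSE case is symmetric. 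For the ``if'' direction, whenever the three are not all equal one checks in each case that a point for $Z_k$ lying between $a_k,b_k$ and on the side of $X$ opposite to $c_k$ exists and can be inserted, since $Z_k$ is a fresh vertex. This is exactly the NAE condition, and it is why the constraints $(a_k,Z_k,b_k)$, $(c_k,X,Z_k)$ together with $(x_i,X,\overline{x_i})$ encode 3-NAESAT$^*$.

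With these pieces the backward direction is immediate: a Wheeler ordering of $G$ restricts to a linear order of $\mathcal{L}^0$; by (b), every gadget being rainbow-free means all betweenness constraints hold; the $(x_i,X,\overline{x_i})$ constraints then yield a consistent assignment and, by the ``only if'' of the core claim applied to each clause, that assignment is not-all-equal, so $\phi'$ is satisfiable. For the forward direction I start from a satisfying NAE assignment and build the canonical nested order of $\mathcal{L}^0$ with $X$ in the center, placing the true literal of each pair to the left of $X$ and its complement to the right; by the ``if'' of the core claim each $Z_k$ can then be inserted to meet $(a_k,Z_k,b_k)$ and $(c_k,X,Z_k)$. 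I then extend this $\mathcal{L}^0$ order to a full vertex ordering by copying it, level by level, down the label-$1$ chains of the menorah and up through the layers, placing the in-degree-zero source first and the gadget vertices $w_1^j,w_2^j,w_3^j$ as in Section~\ref{sec:NPC}; a direct check that no two label-$1$ edges and no two label-$2$ edges form a rainbow, together with the block-split placement of Property~\ref{property:block_split}, verifies both Wheeler axioms.

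The step I expect to be the main obstacle is the forward direction's claim that the required $\mathcal{L}^0$ order is actually realizable \emph{within} the menorah. Because the menorah is a tree whose label-$1$ edges must be drawn level-planarly, its admissible $\mathcal{L}^0$ orders are not arbitrary: the level-$n$ descendants of each spine vertex $s_i^0$ must form a contiguous block, so each pair $\{x_i,\overline{x_i}\}$ is confined to a nesting slot (it may straddle the continuation containing $X$, which realizes the free swap, or sit entirely on one side), and each $Z_k$ is pinned to the slot of the middle variable $x_h=b_k$ at $s_h^0$. I must verify that the canonical nested order with $X$ central lies among these planar leaf-orders, that the per-variable choice of side is independent across $i$, and---most delicately---that each $Z_k$, forced to be a sibling of $x_h$, still has enough freedom to land between $a_k$ and $b_k$ and opposite $c_k$ across $X$. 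Establishing this compatibility, and re-confirming the already-noted bound of $3+2=5$ on the out-degree of $s_i^0$ that keeps the instance a $5$-NFA, is the technical crux; everything else follows the template of Lemma~\ref{lem:3}.
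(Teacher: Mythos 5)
Your proposal follows essentially the same route as the paper's proof: read a truth assignment off the side of $X$ on which each literal vertex falls, reduce each clause to the pair of betweenness constraints $(a_k,Z_k,b_k)$ and $(c_k,X,Z_k)$, and show these are simultaneously realizable iff the clause receives a not-all-equal assignment. Your ``only if'' direction of the core claim is exactly the paper's Table~\ref{table:truth2} argument. The one step you explicitly defer --- that $Z_k$, pinned by the menorah to the slot of $b_k$'s variable $x_h$, still has enough freedom to land between $a_k$ and $b_k$ and on the side of $X$ opposite $c_k$ --- is precisely what the paper settles by the six-row enumeration in Table~\ref{table:truth}, and it does go through: because the constraint $(x_h,X,\overline{x_h})$ forces $x_h$ and $\overline{x_h}$ onto opposite sides of $X$, the block of $s_h^0$'s descendants in $\mathcal{L}^0$ splits into a piece adjacent to $b_k$ and a piece adjacent to $\overline{b_k}$, one on each side of $X$, and in each NAE case one of these two candidate positions for $Z_k$ satisfies both constraints (note that in the paper's table $Z_k$ always appears immediately next to $b_k$ or $\overline{b_k}$, reflecting exactly this pinning). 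So your outline is correct and matches the paper's, but as written the proof is incomplete at exactly the point you flag; closing it needs the case table (or an equivalent short argument), not a new idea.
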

\begin{proof}
Given a truth assignment 
that satisfies 
the 3-NAESAT$^*$ instance $\phi'$, put the vertices in $\mathcal{L}^0$ whose variables are assigned the value $T$ on the left side of $X$ in Figure \ref{fig:d-NFA}, and the vertices whose variables are assigned false on the right side of $X$. For example, if $x_1 = T, x_2 = F$, the two left-most vertex on level $\mathcal{L}^0$ would be $x_1$ followed by $\overline{x_2}$. Now, for clause $(a_k,b_k,c_k)$ we have the possible not-all-equal truth assignments and relative orderings of $\mathcal{L}^0$ which satisfy the Wheeler graph axioms in Table \ref{table:truth}. This shows that a Wheeler graph ordering of the vertices is possible by placing $Z_k$'s correctly given the truth assignment.
\begin{table}[]
    \centering
    \begin{tabular}{c|c|c}
         \multicolumn{3}{c}{Possible Orderings ($a_k$ has variable $x_j$ and $c_k$ has variable $x_h$)} \\ \hline
         $a_k b_k c_k$ &  $j < h$ & $h < j$\\ \hline
         $FFT$ & $c_k \hdots X \hdots b_k, Z_k \hdots a_k$ & $c_k \hdots X \hdots b_k, Z_k \hdots a_k$ \\
         $FTF$ &  $b_k, Z_k \hdots X \hdots c_k \hdots a_k$ & $ b_k, Z_k \hdots X \hdots a_k \hdots c_k$ \\
         $TFF$ & $a_k \hdots \overline{b_k}, Z_k \hdots X \hdots b_k  \hdots c_k$ & $a_k \hdots \overline{b_k}, Z_k \hdots X \hdots b_k  \hdots c_k$ \\
         $FTT$ & $c_k \hdots b_k  \hdots X \hdots \overline{b_k},Z_k \hdots a_k$ & $c_k \hdots b_k  \hdots X \hdots \overline{b_k}, Z_k \hdots a_k$ \\
         $TFT$ & $a_k \hdots c_k \hdots X \hdots Z_k, b_k$ & $c_k \hdots a_k \hdots X \hdots Z_k, b_k$ \\
         $TTF$ & $a_k \hdots Z_k, b_k \hdots X \hdots c_k$ & $a_k \hdots Z_k, b_k \hdots X \hdots c_k$
    \end{tabular}
    \caption{Possible relative orderings of $a_k, b_k, c_k, Z_k, X$ subject to $(a_k, Z_k, b_k)$ and $(c_k, X, Z_k)$.}
    \vspace{-1.5em}
    \label{table:truth}
\end{table}

In the other direction, if $G$ is a Wheeler graph then the ordering of the menorah structure is fixed with the exception of $z_i^j$ vertices and the ordering duplicated across layers $\mathcal{L}^0, \mathcal{L}^1, \hdots$. We will show the ordering given to $\mathcal{L}^0$ must have every clause getting a not-all-equal assignment. Suppose to the contrary that $\mathcal{L}^0$ was given an ordering where either $a_k, b_k, c_k$ are all on the left(true) or the right side(false) of $X$. Then we have the options in Table \ref{table:truth2}.
\begin{table}[]
    \centering
    \begin{tabular}{c|c|c}
         \multicolumn{3}{c}{({\color{red}\bf Not}) Possible Orderings ($a_k$ has variable $x_j$ and $c_k$ has variable $x_h$)} \\ \hline
         $a_k b_k c_k$ &  $j < h$ & $h < j$\\ \hline
         $TTT$ & $a_k \hdots b_k \hdots c_k \hdots X$ & $c_k \hdots b_k \hdots a_k \hdots X$ \\
         $FFF$ & $X \hdots c_k \hdots b_k \hdots a_k$ & $X \hdots a_k \hdots b_k \hdots c_k$
         \end{tabular}
\caption{Orderings implied by all-equal assignment are not possible while satisfying constraints.}
\vspace{-2em}
\label{table:truth2}
\end{table}
In all cases listed in Table \ref{table:truth2}, placing $Z_k$ between $a_k$ and $b_k$ violates the constraint $(c_k, X, Z_k)$, implying we violate a Wheeler graph constraint as well, a contradiction. Hence, if $G$ is a Wheeler graph, a valid ordering for $\mathcal{L}^0$ implies a valid truth assignment for $\phi'$.
\end{proof}

\section{The Wheeler Graph Violation Problem is APX-hard}
\label{sec:inapprox}
In this section, we show that obtaining an approximate solution to the WGV problem that comes within a constant factor of the optimal solution is NP-hard. We do this through a reduction that shows that WGV is at least as hard as solving the Minimum Feedback Arc Set problem (FAS). The Minimum Feedback Arc Set problem in its original formulation is phrased in terms of a directed graph where the objective is to find the minimum number of edges which need to be removed in order to make the directed graph a DAG. A slightly different formulation proves more useful for us. Letting $F_\pi = \{(v_i,v_j) \in E \ \mid \ \pi(v_i) > \pi(v_j)\}$ we have the following:
\begin{lemma}[Younger~\cite{younger1963minimum}]
Determining a minimum feedback arc set for $G = (V,E)$ is equivalent to finding an ordering $\pi$ on $V$ for which $|F_\pi|$ is minimized. 
\end{lemma}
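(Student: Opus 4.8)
The plan is to establish the claimed equivalence by exhibiting a value-preserving correspondence, in both directions, between feedback arc sets of $G$ and total orderings of $V$. Concretely, I would prove the two inequalities $\text{(min FAS size)} \le \min_\pi |F_\pi|$ and $\min_\pi |F_\pi| \le \text{(min FAS size)}$; together these force the two optima to coincide, and the correspondence used will be constructive enough to recover an optimal solution of one problem from an optimal solution of the other. Throughout I would assume $G$ has no self-loops, since a self-loop is a cycle belonging to every feedback arc set but contributing to no $F_\pi$ (a self-loop $(v,v)$ has $\pi(v) = \pi(v)$, never $\pi(v) > \pi(v)$); if self-loops are present they can be deleted first and added back to both optima.

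First I would handle the direction "ordering $\Rightarrow$ feedback arc set." Fix any ordering $\pi$ and consider its backward-edge set $F_\pi = \{(v_i,v_j)\in E \mid \pi(v_i) > \pi(v_j)\}$. After deleting $F_\pi$, every surviving edge $(v_i,v_j)$ satisfies $\pi(v_i) < \pi(v_j)$, so $\pi$ is a topological ordering of $(V, E\setminus F_\pi)$; a digraph that admits a topological ordering is acyclic. Hence $F_\pi$ is itself a feedback arc set of $G$, giving $\text{(min FAS size)} \le |F_\pi|$ for every $\pi$, and therefore $\text{(min FAS size)} \le \min_\pi |F_\pi|$.

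For the reverse direction "feedback arc set $\Rightarrow$ ordering," I would take any feedback arc set $E' \subseteq E$. By definition $G' = (V, E\setminus E')$ is acyclic, so it admits a topological ordering $\pi$. Under $\pi$ every edge of $E\setminus E'$ points forward, so none of these edges lies in $F_\pi$; consequently $F_\pi \subseteq E'$ and $|F_\pi| \le |E'|$. Taking $E'$ to be a minimum feedback arc set yields $\min_\pi |F_\pi| \le \text{(min FAS size)}$. Combining the two inequalities gives equality of the optima, and the argument is constructive: an optimal ordering produces the optimal arc set $F_\pi$, and an optimal arc set produces an optimal ordering via topological sort of the resulting DAG. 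I do not expect a genuine obstacle here; the only points requiring care are the standard fact that a digraph is acyclic if and only if it has a topological ordering, and the self-loop edge case noted above.
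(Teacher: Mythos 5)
Your proposal is correct: the two inequalities you establish (backward edges of any ordering form a feedback arc set; a topological order of the DAG left after deleting a feedback arc set has its backward edges contained in that set) are exactly the standard argument for Younger's observation, and your handling of self-loops is a legitimate edge case worth flagging. Note that the paper itself offers no proof of this lemma---it is stated with a citation to Younger (1963) and used as a black box---so there is no in-paper argument to compare against; your write-up would serve as a self-contained justification of the cited fact.
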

From this, we can present the equivalent formulation of FAS.
\begin{definition}[Minimum Feedback Arc Set (FAS)]
   The input is a list $T = t_1 t_2 \hdots t_n$ of $n$ numbers and a set of $k$ inequalities of the form $t_i < t_j$. 
   This task is to compute an ordering $\pi$ on $T$ so that the number of inequalities violated in minimized.
\end{definition}

Interestingly, we could not have used FAS for proving that the Wheeler graph recognition problem is NP-complete, as FAS is fixed-parameter tractable in terms of the size of the feedback arc set~\cite{DBLP:journals/jacm/ChenLLOR08}. This implies that if we were to set the feedback arc set to size zero (which we will see is equivalent to no Wheeler graph axiom violations in following reduction), the problem becomes solvable in polynomial time.

On the other hand, it has been shown that FAS is APX-hard, meaning that every problem in APX is reducible to it~\cite{kann1992approximability}. It also implies, assuming NP $\neq$ P, that there is a constant $C$ such that there is no polynomial time algorithm which provides a $C$-approximation. The reduction provided in this section implies:
\begin{theorem}
\label{theorem:APX-hard}
The WGV problem is APX-hard.
\end{theorem}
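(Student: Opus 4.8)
The plan is to give an approximation-preserving reduction (an L-reduction) from FAS to WGV, using the no-rainbow Property~\ref{property:forbidden} to encode each FAS inequality as a small gadget whose only possible Wheeler-axiom violation is a single removable rainbow. Since FAS is APX-hard~\cite{kann1992approximability} and L-reductions preserve membership in APX, this transfers APX-hardness to WGV.

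First I would describe the construction $f$. Given an FAS instance with list $T = t_1\cdots t_n$ and inequalities $t_i < t_j$, I build a single-source DAG $G$. A label-$1$ ``skeleton'' of equal-length chains from the source forces the $n$ vertices $v_1,\dots,v_n$ representing $T$ to be freely permutable in any proper ordering, while every permutation $\pi$ of them is realizable (as in the menorah structure of Section~\ref{sec:d-NFA}); this permutation will play the role of the FAS ordering. For each inequality $t_i < t_j$ I add a fresh layer containing two target vertices $p,q$ whose order is pinned to $p <_\pi q$ by additional label-$1$ skeleton edges, together with the two label-$2$ edges $(v_i, p, 2)$ and $(v_j, q, 2)$. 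By Property~\ref{property:forbidden}, these two edges are consistent with the Wheeler axioms precisely when $v_i <_\pi v_j$ (the inequality is satisfied), and they form a rainbow, forcing the deletion of one of them, precisely when $v_j <_\pi v_i$ (the inequality is violated). Using fresh target vertices per inequality makes the gadgets edge-disjoint, so their forced deletions do not interfere. To keep the skeleton rigid I would replicate each skeleton path $k+1$ times, so that deleting any skeleton edge costs more than deleting all $k$ constraint edges and is therefore never part of an optimal solution.

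Next I would prove the two directions. For the forward direction, given a FAS ordering $\pi$ with $V$ violated inequalities, I place the $v_i$ according to $\pi$, extend $\pi$ to all auxiliary vertices layer by layer (this is exactly where arched level planarity of each label class is used), and delete one label-$2$ edge from each of the $V$ violated gadgets; the remaining graph satisfies both Wheeler axioms, giving $\mathrm{OPT}_{WGV}(f(x)) \le \mathrm{OPT}_{FAS}(x)$, i.e.\ $\alpha = 1$. For the backward direction, given any Wheeler subgraph of $G$ obtained by deleting an edge set $E'$, I read off its proper ordering $\pi$ and let $g$ return $\pi$ restricted to $v_1,\dots,v_n$. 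For every inequality violated by this $\pi$, the corresponding gadget would contain a rainbow unless at least one of its edges lies in $E'$, and since the gadgets are edge-disjoint the number of violated inequalities is at most $|E'|$. Hence $c_{FAS}(g(y)) \le c_{WGV}(y)$; combined with the forward bound this yields $\mathrm{OPT}_{WGV}(f(x)) = \mathrm{OPT}_{FAS}(x)$ and the second L-reduction inequality $c_{FAS}(g(y)) - \mathrm{OPT}_{FAS} \le c_{WGV}(y) - \mathrm{OPT}_{WGV}$ with $\beta = 1$.

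I expect the main obstacle to be the rigidity analysis of the gadget: I must verify that pinning $p <_\pi q$ together with the redundant skeleton genuinely forces each violated inequality to be ``paid for'' by a deleted constraint edge, so that a solver cannot evade a rainbow more cheaply by reordering the targets or by deleting skeleton edges, and that the skeleton together with all the satisfied gadgets simultaneously admits a single proper ordering (the layer-by-layer planarity argument). Establishing these points cleanly is what makes the constants $\alpha = \beta = 1$ valid and hence delivers the APX-hardness of WGV.
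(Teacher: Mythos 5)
Your overall strategy is the same as the paper's: an approximation-preserving reduction from FAS in which a replicated (``heavy'') label-$1$ skeleton is too expensive to delete, each inequality $t_i < t_j$ gets its own layer and its own pair of label-$2$ edges into two fresh target vertices, and Property~\ref{property:forbidden} turns each violated inequality into exactly one forced edge deletion, giving an exact cost correspondence and hence preservation of $C$-approximations. The paper's Lemmas~\ref{lem:vertex_ordering} and~\ref{lem:reduction_scores} carry out precisely the rigidity and cost-accounting steps you identify as the main obstacles.

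There is, however, one concrete flaw in your construction: you pin the order $p <_\pi q$ of the two target vertices using ``additional label-$1$ skeleton edges'' into $p$ and $q$. Those same vertices must also receive the label-$2$ edges $(v_i,p,2)$ and $(v_j,q,2)$, and by Axiom~(i) (equivalently Property~1) a vertex cannot have inbound edges with two distinct labels --- two edges $(x,p,1)$ and $(v_i,p,2)$ force $p <_\pi p$. So in any Wheeler subgraph at least one inbound edge of each target must go; since your label-$1$ pinning edges are replicated $k+1$ times, the cheapest repair is always to delete the single label-$2$ constraint edge, which costs one deletion per gadget \emph{regardless} of whether the inequality is satisfied. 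That collapses the optimum of $f(x)$ to $k$ independently of $\mathrm{OPT}_{FAS}(x)$ and destroys the reduction. The paper avoids this by pinning the targets with heavy edges of the \emph{same} label $2$ emanating from anchor vertices whose positions are already fixed by the label-$1$ skeleton (namely $v_0$ and an extra column $v_{n+1}^j$): the heavy edges $(v_0,w_1^1,2)$, $(v_{n+1}^j,w_2^j,2)$, $(v_{n+1}^j,w_1^{j+1},2)$ force $w_1^1 \leq_\pi w_2^1 \leq_\pi w_1^2 \leq_\pi \cdots$ via Axiom~(ii) without ever giving a target two inbound labels. With that substitution your argument goes through as in the paper.
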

In addition, Guruswami {\it et al.} demonstrated that assuming the Unique Games Conjecture holds, and NP $\neq$ P, there is no constant $C > 1$ such that an algorithm's approximate solution to FAS is always a factor $C$ from the optimal solution. We state this as a lemma. 
\begin{lemma}[Guruswami {\it et al.}~\cite{DBLP:conf/focs/GuruswamiMR08}]
\label{lemma:C-approx_MFAS}
Conditioned on the Unique Games Conjecture, for every $C \geq 1$, it is NP-hard to
find a C-approximation to FAS.
\end{lemma}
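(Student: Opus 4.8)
The plan is to obtain the FAS inapproximability as a corollary of the stronger statement at the heart of Guruswami {\it et al.}: that, under the Unique Games Conjecture, the Maximum Acyclic Subgraph problem (MAS) cannot be approximated beyond the trivial factor $1/2$ achieved by a random ordering. Concretely, I would establish a gap-producing reduction from Unique Games to weighted ordering instances (i.e., directed graphs with constraints of the FAS type) such that, for every $\varepsilon > 0$: in the completeness case a $(1-\varepsilon)$ fraction of the constraints can be simultaneously satisfied by a single linear ordering, while in the soundness case no ordering satisfies more than a $(1/2+\varepsilon)$ fraction. Since the constraints violated by $\pi$ are exactly a feedback arc set, this immediately yields FAS instances whose optimum is $\le \varepsilon m$ in the completeness case and $\ge (1/2-\varepsilon)m$ in the soundness case, where $m$ is the number of constraints.

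First I would set up the dictatorship / long-code test from which the reduction is built. Each vertex of the Unique Games instance is replaced by a block of coordinates (a long code), and the ordering constraints are generated by a local test that samples a few coordinates and imposes a precedence- or betweenness-style check consistent with a fixed base ordering. The completeness analysis is routine: the dictator functions, which encode the intended Unique Games labels, satisfy almost all sampled constraints and give the $(1-\varepsilon)$ bound.

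The soundness analysis is the crux and the main obstacle. Here I would invoke the invariance principle in the style of Mossel--O'Donnell--Oleszkiewicz to argue that any ordering whose indicator functions have no influential coordinate behaves, with respect to the local test, like a random Gaussian ordering, and therefore satisfies at most a $(1/2+\varepsilon)$ fraction of the sampled constraints. Combined with the standard decoding argument---any ordering beating this threshold must expose an influential coordinate, from which a good Unique Games labeling can be read off---this contradicts the hardness of Unique Games and establishes the soundness bound. Carrying out the Gaussian rounding and the influence-to-label decoding for an \emph{ordering} predicate, rather than a Boolean CSP, is the delicate and technically heaviest step.

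Finally I would translate the MAS gap into the FAS gap and conclude. Given $C \ge 1$, choose $\varepsilon < \frac{1}{2(C+1)}$, so that $C\varepsilon < \frac{1}{2}-\varepsilon$. A hypothetical polynomial-time $C$-approximation to FAS would, on a completeness instance (optimum $\le \varepsilon m$), return a solution of size at most $C\varepsilon m < (\frac{1}{2}-\varepsilon)m$, whereas on a soundness instance every solution has size at least $(\frac{1}{2}-\varepsilon)m$. Thus the returned value distinguishes the two cases and decides the underlying Unique Games instance, contradicting UGC (assuming P $\ne$ NP). Hence, conditioned on the Unique Games Conjecture, no constant-factor approximation for FAS exists, which is exactly the claimed lemma.
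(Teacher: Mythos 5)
There is nothing to compare your proof against: the paper does not prove this lemma. It is imported as an external result, attributed to Guruswami, Manokaran, and Raghavendra, and the paper's entire ``proof'' consists of the citation. What you have written is, in effect, a blind reconstruction of the proof inside that cited reference, and at the level of a plan it is faithful to it: their theorem is indeed a UGC-based gap result for ordering constraints --- completeness $(1-\varepsilon)$ versus soundness $(1/2+\varepsilon)$, i.e.\ beating the random ordering for Maximum Acyclic Subgraph is UG-hard --- obtained from a long-code/dictatorship test whose soundness analysis runs through an invariance-principle argument adapted to ordering predicates, which is exactly the step you single out as the technical crux. Your final paragraph, converting that gap into the lemma as stated (given $C$, pick $\varepsilon < \tfrac{1}{2(C+1)}$ so that $C\varepsilon < \tfrac12 - \varepsilon$, whence a $C$-approximation to FAS would distinguish optimum $\le \varepsilon m$ from optimum $\ge (\tfrac12-\varepsilon)m$ and decide the Unique Games instance), is correct, elementary, and is the standard derivation. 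Two caveats: first, your middle two paragraphs are an outline, not a proof --- the ordering-predicate invariance principle and the influence-to-label decoding constitute essentially the whole technical content of the cited paper, so as written your argument has a large hole that can only be filled by reproducing that work; second, their reduction produces weighted instances, while FAS as defined in this paper is an unweighted list of inequalities, so a routine weight-discretization/constraint-duplication step is needed. For the purposes of this paper, the authors' choice to cite rather than re-prove is the right one; your sketch is best read as a summary of the cited result rather than as an alternative proof.
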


An approximation preserving reduction from FAS to WGV combined with Lemma \ref{lemma:C-approx_MFAS} proves the other main result of this section:

\begin{theorem}
\label{theorem:C-approx_Wheeler}
Conditioned on the Unique Games conjecture, for every constant $C \geq 1$, it is NP-hard to
find a $C$-approximation to WGV.
\end{theorem}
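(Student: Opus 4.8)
The plan is to construct an approximation-preserving reduction from FAS to WGV and then invoke Lemma~\ref{lemma:C-approx_MFAS} directly. Given an instance of FAS---a list $T = t_1 t_2 \hdots t_n$ together with a set of $k$ inequalities of the form $t_i < t_j$---I want to build a directed edge-labeled graph $G$ so that orderings $\pi$ of $T$ correspond to orderings of the vertices of $G$, and so that violating an inequality $t_i < t_j$ in $\pi$ is in exact one-to-one correspondence with needing to delete exactly one edge of $G$ to repair a Wheeler-graph axiom violation. If I can arrange that the minimum number of edges whose removal makes $G$ a Wheeler graph equals the minimum number of violated inequalities (or differs from it by a fixed multiplicative constant), then any $C$-approximation for WGV yields a $C'$-approximation for FAS with $C'$ depending only on $C$, and Theorem~\ref{theorem:C-approx_Wheeler} follows.

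First I would encode the elements of $T$ as vertices whose positions in a proper Wheeler ordering are free to permute, reusing the ``menorah''/layering machinery of Sections~\ref{sec:NPC} and~\ref{sec:d-NFA} so that a proper ordering of $G$ restricted to these vertices induces a total order on $T$, and conversely every total order on $T$ extends to a legal ordering of the scaffold. Second, for each inequality $t_i < t_j$ I would attach a small gadget---a pair of edges sharing a label---engineered so that it satisfies the Wheeler axioms precisely when $\pi(v_i) <_\pi \pi(v_j)$, and forms a rainbow (Property~\ref{property:forbidden}) otherwise. The point is that a single rainbow is repaired by deleting exactly one edge, so the optimal $E'$ in WGV selects, for each violated inequality, one edge to remove. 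Third, I would verify the two directions of the correspondence: an ordering violating $m$ inequalities gives an $E'$ with $|E'| = m$, and any $E'$ making $G$ Wheeler induces an ordering violating at most $|E'|$ inequalities (the scaffold edges can be assumed never deleted, since deleting them cannot help and would only loosen the accounting). Fourth, I would confirm linearity of the objective so the reduction is approximation-preserving, then combine with Lemma~\ref{lemma:C-approx_MFAS} and the observation that under UGC no constant factor is achievable for FAS.

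The main obstacle I anticipate is ensuring that the scaffold edges---the ones fixing the skeleton positions---are never advantageous to delete, and more subtly that the inequality gadgets are mutually independent, so that the total count of edges that must be removed is exactly the number of violated inequalities rather than something skewed by interactions between gadgets sharing a vertex. Betweenness-style constraints can couple through shared endpoints, and if several gadgets attach to the same $t_i$-vertex a single deletion might simultaneously repair several violations (which would break the lower bound) or a single violation might force several deletions (which would break the upper bound). I would handle this by giving each inequality its own private layer or its own private pair of terminal vertices, so that the edge deleted to repair one rainbow is disjoint from the edges of every other gadget; this isolation is what preserves the exact (or fixed-ratio) equality $|E'| = |F_\pi|$ that the approximation-preserving property requires. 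The remaining verification---that deletions confined to gadget edges always suffice to reach a Wheeler graph, and that the induced $\pi$ is well defined---is then routine given the Wheeler axioms and Property~\ref{property:block_split}.
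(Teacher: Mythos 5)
Your plan follows the same route as the paper: an approximation-preserving reduction from FAS to WGV built from a rigid scaffold plus one private two-edge ``rainbow'' gadget per inequality, followed by an appeal to Lemma~\ref{lemma:C-approx_MFAS}. The paper's construction is exactly the private-layer design you describe: vertices $v_i^j$ for $1\le i\le n+1$, $1\le j\le k$, one layer and one pair of terminals $w_1^j,w_2^j$ per inequality, and light edges $(v_{i_1}^j,w_1^j,2)$, $(v_{i_2}^j,w_2^j,2)$ that form a rainbow precisely when the $j$th inequality is violated by the induced order.

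There is, however, one genuine gap at the point you yourself flag as ``the main obstacle'' and then wave away: the claim that deleting scaffold edges ``cannot help and would only loosen the accounting'' is false for a naive scaffold of single edges, and it is exactly what needs a construction to enforce. If the chain edges $(v_i^j,v_i^{j+1},1)$ are ordinary edges, an adversarial WGV solution can delete one of them, detach the suffix of element $i$'s chain, and reposition $v_i^{j+1},\dots,v_i^{k}$ so as to repair every later gadget involving $t_i$ at the cost of a single deletion; this destroys the lower bound $|E'|\ge|F_\pi|$, and also breaks the extraction of a single consistent ordering of $T$ from a WGV solution (different layers may then carry different permutations). The paper closes this hole with \emph{heavy edges}: each scaffold edge is replaced by $k+1$ parallel subdivided copies, so any ordering that deviates from the rigid skeleton incurs at least $k+1$ violations, while the intended ordering incurs at most $k$; this is the content of Lemma~\ref{lem:vertex_ordering}, on which Lemmas~\ref{lem:reduction_scores} and~\ref{lem:C-approx} rest. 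Your proposal needs this (or an equivalent weighting/multiplicity device) to be made into a proof; with it added, the rest of your outline matches the paper's argument.
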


\subsection{The Reduction of FAS to WGV}

Let  $T = t_1, t_2, \hdots, t_n$ and inequalities $t_1^1 < t_2^1, t_1^2 < t_2^2, \hdots,  t_1^k < t_2^k$ be the input to FAS.

We define a heavy edge between the vertices $u$ and $v$ with label $\ell$ as $k+1$ subdivided edges between $u$ and $v$ each with label $\ell$. That is, a heavy edge between $u$ and $v$ with label $\ell$ consists of the edges $(u,w_i,\ell)$ and $(w_i, v, \ell)$ for $1 \leq i \leq k+1$. See Figure \ref{fig:edges} for an illustration.
Use the following steps to create a graph (which is a DAG):
\begin{wrapfigure}{r}{0.5\textwidth}
\vspace{-12pt}
\begin{center}
    \includegraphics[width=.5\textwidth]{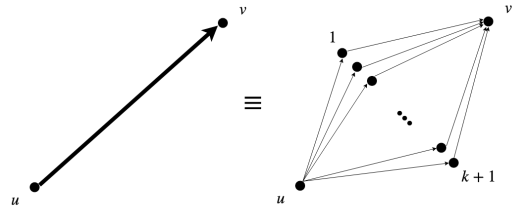}
    \caption{A bold edge in Figure \ref{fig:C-approx_red} is actually $k+1$ subdivided edges.}
    \label{fig:edges}
\end{center}
\vspace{-10em}
\end{wrapfigure}
\vspace{-1em}
\begin{itemize}
    \item Create a vertex $v_0$ and the vertices $v_i^j$ for $1 \leq i \leq n+1$ and $1\leq j \leq k$.
    \item For each inequality $t_1^j < t_2^j$ create a vertex for each element in the inequality, we call them $w_1^j$ and $w_2^j$, respectively.
    \item Create the heavy edges $(v_0, v_i^1,1)$ for $1 \leq i \leq n+1$ and the heavy edges $(v_i^j, v_i^{j+1},1)$ for $1 \leq i \leq n+1, \ 1\leq j \leq k-1$.
    \item Create the heavy edges $(v_0, w_1^1, 2)$, and the heavy edges $(v_{n+1}^j,w_2^j,2)$ and \\ $(v_{n+1}^j, w_1^{j+1},2)$ for $1 \leq j \leq k-1$, and the heavy edge $(v_{n+1}^k, w_2^k, 2)$.
    \item  Finally, add the regular (not heavy) edges $(v_i^j, w_1^j, 2)$ if $t_i = t_1^j$, and $(v_i^j, w_2^j, 2)$ if $t_i = t_2^j$ for $1\leq i \leq n$, $1 \leq j \leq k$.
\end{itemize}
\begin{wrapfigure}{r}{.5\textwidth}
\vspace{-8.6em}
\begin{center}
    \includegraphics[width=.5\textwidth]{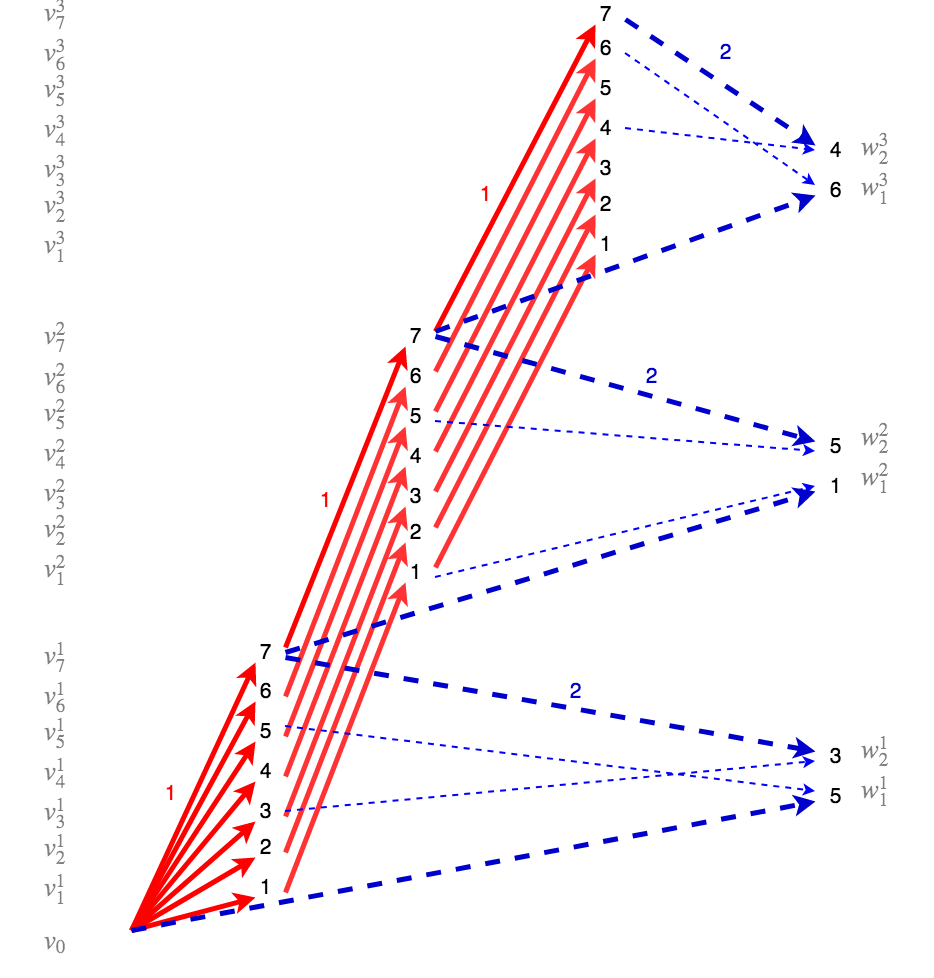}
    \caption{An example of the reduction from FAS to WGV where $T = 1,2,3,4,5,6$ and the set of inequalities is $5 < 3$, $1 < 5$, and $6 < 4$.}
    \label{fig:C-approx_red}
    \vspace{-.8em}
\end{center}
\end{wrapfigure}

An example of the reduction is given in Figure \ref{fig:C-approx_red}. The intuition is that the vertices with an inbound heavy edge labeled one represent the permutation of the elements in $T$. The heavy edges labeled one force the permutation to be duplicated $k$ times, once for each constraint. The vertices with the inbound edge label two represent the elements in each inequality. Equivalence between a solution to an instance of FAS and the constructed instance of WGV follows from the lemmas presented next.

In the following lemma we let $E'$ be a solution to WGV and $G' = (V,E \backslash E')$. Moreover we let $\pi$ represent a proper ordering on the vertices of $G'$. The first lemma indicates that, other than permuting the ordering found on the vertices $v_i^j$ for the group defined by $1\leq i \leq n$ (with the ordering duplicated for $1\leq j \leq k$), the ordering for the vertices in Figure \ref{fig:C-approx_red} is fixed. We formalize this with the following lemma.
\begin{lemma}
\label{lem:vertex_ordering}
Let $\phi$ represent a permutation of the set $[n+1]$. Any ordering $\pi$ which is a proper ordering of $V$ in $G'$ is of the form  $$ v_0, v_{\phi(1)}^1, v_{\phi(2)}^1, \hdots v_{\phi(n+1)}^1, \hdots v_{\phi(1)}^k, v_{\phi(2)}^k, \hdots v_{\phi(n+1)}^k, w_1^1, w_2^1, w_1^2, w_2^2, \hdots w_1^k, w_2^k.$$
\end{lemma}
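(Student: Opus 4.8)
The plan is to prove Lemma~\ref{lem:vertex_ordering} by establishing the ordering of the vertices in stages, using the structure forced by the heavy edges together with the Wheeler graph axioms and the properties stated earlier in the excerpt. The central leverage is that a heavy edge is $k+1$ subdivided edges, so any solution $E'$ to WGV of size at most the number of \emph{regular} edges (which is $O(k)$ here) can never afford to delete an entire heavy edge; indeed the optimal WGV value is bounded by $k$ (deleting one regular edge per violated inequality suffices), so no heavy edge can be fully severed in $G'$. First I would argue that every heavy edge survives intact in $G'$: if even one of the $k+1$ parallel subdivided paths of a heavy edge were broken, the remaining graph would need the rest of that heavy-edge bundle to still be placed consistently, and I would show that breaking a heavy edge is strictly more costly (or at least no cheaper) than the natural solution of size $\le k$, so WLOG all heavy edges are present. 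This lets me treat the heavy-edge skeleton as rigid.

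Next I would pin down the skeleton ordering. Since vertices with in-degree zero must come first (by the definition) and $v_0$ is the unique source, $v_0$ is placed first. Using Property~\ref{property:block_split} and the remark that for a single label proper orderings respect topological order, I would argue layer by layer: all $v_i^1$ (the targets of the label-1 heavy edges out of $v_0$) form the block $V_1$ with inbound label $1$, and the chain structure $(v_i^j, v_i^{j+1},1)$ forces the $j$-indexed copies to appear in successive blocks in the order $j=1,2,\dots,k$. Within each block $j$ the relative order of the $n+1$ vertices is some permutation $\phi$; the rainbow-freeness of Property~\ref{property:forbidden} applied to the label-1 chain edges forces this permutation to be \emph{identical} across all $k$ layers, which is exactly the claim that the ordering is duplicated. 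I would then place the $w$-vertices: they are precisely the vertices with inbound label $2$, so by Property~2 they are ordered consecutively after all the label-1 vertices, and the label-2 heavy edges $(v_0,w_1^1,2)$, $(v_{n+1}^j,w_2^j,2)$, $(v_{n+1}^j,w_1^{j+1},2)$, $(v_{n+1}^k,w_2^k,2)$ chain the $w$-vertices into the forced sequence $w_1^1, w_2^1, w_1^2, w_2^2, \dots, w_1^k, w_2^k$.

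The main obstacle I anticipate is the duplication-of-$\phi$-across-layers step: showing that the permutation cannot drift between consecutive layers $j$ and $j+1$. The argument has to combine axiom~(ii) (equal labels preserve order: $k=k' \wedge u<_\pi u' \implies v \le_\pi v'$) with the fact that the edges $(v_{\phi(i)}^j, v_{\phi(i)}^{j+1},1)$ all carry the same label $1$, so if $v_{\phi(i)}^j <_\pi v_{\phi(i')}^j$ then their label-1 successors must satisfy $v_{\phi(i)}^{j+1} \le_\pi v_{\phi(i')}^{j+1}$, and since all vertices are distinct this forces the same strict order, i.e. rainbow-freeness propagates the permutation intact. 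I would need to handle the boundary interaction carefully, because the label-2 heavy edges attach specifically to $v_{n+1}^j$, which constrains $v_{n+1}^j$ to sit at a fixed position (the top of its block) so that the label-2 block placement is consistent; I must verify this does not conflict with the freedom of $\phi$ on $\{1,\dots,n\}$, which is why the construction singles out index $n+1$. Once the skeleton rigidity and the layerwise duplication are in hand, reading off the stated normal form is routine, and the remaining content — that the surviving regular label-2 edges encode exactly the violated inequalities — would be deferred to the companion lemma completing the reduction.
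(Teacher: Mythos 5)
Your proposal is correct and follows essentially the same route as the paper's proof: both hinge on the observation that the baseline ordering certifies $|E'|\leq k$, so no constraint enforced by a heavy edge (with its $k+1$ parallel subdivided paths) can be violated without forcing $|E'|\geq k+1$, which rigidifies the skeleton; from there $v_0$ first, the block order over $j$, the duplication of the permutation across layers via Axiom~(ii), and the forced chain $w_1^1, w_2^1, \hdots, w_1^k, w_2^k$ all follow exactly as you outline. The paper packages this as a case-by-case contradiction for each possible deviation rather than first proving skeleton survival, but the underlying counting argument is identical.
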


\begin{proof}
We consider an edge $(u,v,k)$ as violating a Wheeler graph axiom if
\begin{enumerate}
    \item there exists an edge $(u',v',k')$ with $k < k'$ and $v \geq_\pi v'$, or
    \item there exists an edge $(u',v',k')$ with $k = k'$ and $u <_\pi u'$ and $v' < v$, or
    \item the in-degree of $u$ is zero and there exist $w\in V$ where in degree $w$ is one or greater and $w <_\pi u$.
\end{enumerate}
The ordering given in Figure \ref{fig:C-approx_red} causes at most $k$ edges to violate a Wheeler graph axiom, so we know that $|E'| \leq k$. If any of the $w$ vertices is placed before a $w$ vertex in $\pi$ that causes $k+1$ edges to violate Wheeler graph Axiom (ii), implying $|E'| \geq k+1$, a contradiction. Similarly, $v_0$ must be placed first in the ordering, otherwise $|E'| \geq k+1$.

A $v^{j}$ vertex must precede a $v^{j+1}$ vertex in $\pi$, for $j \geq 1$. Otherwise, consider the lowest ordered such $v_i^{j+1}$ that is preceding a $v^j$ vertex. If $v_t^j$ follows $v_i^{j+1}$ in the ordering, then the heavy edge $(v_i^{j},v_i^{j+1},1)$ violates Wheeler graph axiom (ii) due to the edge $(v_t^{j-1},v_t^{j},1)$  when $j \geq 2$ and $(v_0,v_t^{j},1)$ when $j = 1$. This is since $v_t^{j-1} <_\pi v_i^j$ and $v_i^{j+1} <_\pi v_t^j$. This causes $k+1$ violations, implying $|E'| \geq k+1$, a contradiction. 

The same ordering that was found on the vertices $v_{1}^j, v_{2}^j, \hdots v_{n+1}^j$ must be duplicated across the vertices $v_{1}^{j+1}, v_{2}^{j+1}, \hdots v_{n+1}^{j+1}$. Otherwise, consider the lowest ordered vertex $v_{i}^{j+1}$ in the second group which violates the ordering of the first. Supposing, $v_{t}^j$ is element preceding $v_i^j$ in the ordering, then the heavy edge $(v_t^j,v_t^{j+1},1)$ violates Axiom (ii) due to edge $(v_i^j, v_i^{j+1},1)$  since $v_t^j <_\pi v_i^j$ and $v_i^{j+1} <_\pi v_t^{j+1}$. This creates $k+1$ violations, a contradiction.

The vertex $w_1^1$ must be ordered first in the $w$ block, else $(v_0,w_1^1,2)$ and $(v_{n+1}^1,w_2^1,2)$ cause $k+1$ violations. 

The vertex $w_2^1$ must precede $w_1^2$, else the heavy edge $(v_{n+1}^1,w_2^1,2)$ and edge $(v_{i_1}^2, w_1^2,2)$ where $t_{i_1} = t_1^2$ cause $k+1$ violations since $v_{n+1}^1 <_\pi v_{i_1}^2$ but $w_1^2 <_\pi w_2^1$. 

The vertex $v_{n+1}^1$ must proceed the vertex $v_{i_2}^1$ where $t_{i_2} = t_2^1$. Otherwise the edges $(v_{n+1}^1,w_1^2,2)$ and $(v_{i_2}^1, w_2^1,2)$ cause $k+1$ violations since $v_{n+1}^1 <_\pi v_{i_2}^1$ but $w_2^1 <_\pi w_1^2$. 

We can inductively proceed to $w_1^k$ and $w_2^k$ making the same arguments.
\end{proof}


Let $f(x)$ refer to the reduction described above applied to an instance $x$ of FAS creating an instance of WGV. We also refer to the solution to either of these problems as OPT$(\cdot)$, and $val(\cdot)$ as the cost function. For FAS $val(x)$ is the number of violated inequalities and for WGV it is the number of violating edges.
\begin{lemma}
\label{lem:reduction_scores}
Given an instance $x$ of FAS, a solution(or sub-optimal solution) to the instance $f(x)$ of WGV that has $\ell \leq k$ axiom violating edges yields a solution(or sub-optimal solution) to $x$ with $\ell$ violated inequalities. The converse holds as well.
\end{lemma}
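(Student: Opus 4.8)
The goal is to establish a tight correspondence between axiom-violating edges in the WGV instance $f(x)$ and violated inequalities in the FAS instance $x$, in both directions. The crucial structural fact is already in hand: Lemma~\ref{lem:vertex_ordering} tells us that any proper ordering $\pi$ of $G'$ (obtained after deleting a solution $E'$ with $|E'| \le k$) has the rigid form listed there, namely $v_0$ first, then $k$ duplicated copies of a permutation $\phi$ of the $v_i$ vertices, then the $w$-block in its fixed order $w_1^1, w_2^1, \dots, w_1^k, w_2^k$. So the only genuine degree of freedom is the permutation $\phi$ on $[n+1]$, and I would read off from $\phi$ an ordering on $T$ and show the violating edges are exactly the edges encoding violated inequalities.

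\textbf{Forward direction (WGV solution $\Rightarrow$ FAS solution).} Suppose $E'$ is a solution (or any feasible deletion set) to $f(x)$ with $\ell \le k$ axiom-violating edges. First I would argue that $E'$ contains no heavy edge: by the counting arguments already used inside the proof of Lemma~\ref{lem:vertex_ordering}, any configuration forcing a heavy edge to violate an axiom forces $\ge k+1$ violations, contradicting $\ell \le k$; hence every violating edge is a light (non-heavy) edge, and these are precisely the edges $(v_i^j, w_1^j, 2)$ and $(v_i^j, w_2^j, 2)$ that encode the endpoints $t_1^j, t_2^j$ of the inequalities. Now fix the permutation $\phi$ guaranteed by Lemma~\ref{lem:vertex_ordering} and define the induced ordering on $T$ by placing $t_i$ according to the relative position of $v_i$ in $\phi$ (ignoring $v_{n+1}$, which is a sentinel). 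The claim is that inequality $j$, namely $t_1^j < t_2^j$, is violated under this $T$-ordering if and only if the pair of label-$2$ edges into $w_1^j, w_2^j$ contributes an axiom violation. The blue (label $2$) edges into the $w$-block are forced, by the fixed positions of $w_1^j$ before $w_2^j$, to be non-crossing exactly when $t_1^j$ precedes $t_2^j$ in $\phi$; a crossing (i.e.\ $\phi$ places $t_2^j$ before $t_1^j$) is precisely a rainbow/Axiom-(ii) violation by Property~\ref{property:forbidden}. Thus each violated inequality contributes exactly one violating edge and vice versa, giving $\ell$ violated inequalities.

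\textbf{Converse direction (FAS solution $\Rightarrow$ WGV solution).} Given an ordering $\pi_T$ on $T$ with $\ell$ violated inequalities, I would build the permutation $\phi$ on $[n+1]$ by ordering the $v_i$ according to $\pi_T$ and appending $v_{n+1}$ last, then lay out all vertices of $G$ in the rigid form of Lemma~\ref{lem:vertex_ordering}. By the same non-crossing analysis, the only label-$2$ edges that violate an axiom under this layout are exactly those corresponding to inequalities that $\pi_T$ violates, and all heavy edges and label-$1$ edges are satisfied by construction. Deleting one light edge per violated inequality yields a Wheeler graph with exactly $\ell$ deleted edges, so $val(f(x)) \le \ell$ and in particular $val(\mathrm{OPT}(f(x))) \le val(\mathrm{OPT}(x))$; combined with the forward direction this gives equality of optima and the claimed bijection on sub-optimal solutions.

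\textbf{Main obstacle.} The delicate point is verifying the ``if and only if'' at the level of a single inequality: I must confirm that a misordered pair $t_2^j$ before $t_1^j$ produces \emph{exactly one} unavoidable violating edge (not zero, and not more than one that could be double-counted across the $k$ duplicated layers), and that satisfied inequalities produce \emph{none}. This requires carefully checking that the heavy edges feeding the $w$-block (the $(v_{n+1}^j, w_2^j, 2)$ and $(v_{n+1}^j, w_1^{j+1}, 2)$ edges) pin down the positions of the $w$ vertices so tightly that the sole remaining freedom is the crossing/non-crossing status of the two light edges per clause — essentially re-using, at the level of label $2$, the planarity argument that underlies the Betweenness reduction of Section~\ref{sec:NPC}. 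Once that local gadget analysis is pinned down, the global correspondence and the equality $val(f(x)) = val(x)$ for matched solutions follow immediately.
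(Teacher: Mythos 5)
Your proposal is correct and follows essentially the same route as the paper: invoke Lemma~\ref{lem:vertex_ordering} to fix everything except the permutation of $v_1^j,\dots,v_{n+1}^j$, read that permutation off as an ordering of $T$ (discarding the sentinel $v_{n+1}$), and match each violated inequality $t_1^j<t_2^j$ with the single rainbow formed by the light edges $(v_{i_1}^j,w_1^j,2)$ and $(v_{i_2}^j,w_2^j,2)$, with the converse obtained by laying out the $T$-ordering and deleting one light edge per crossing pair. The ``double-counting across layers'' worry you raise does not arise, since the light edges for inequality $j$ live only in layer $j$; otherwise your argument matches the paper's.
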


\begin{proof}
By Lemma \ref{lem:vertex_ordering} any two optimal ordering of the vertices in $G'$ 
must differ only in the ordering given to $v_1^j, \hdots, v_n^j, v_{n+1}^j$, duplicated for $1 \leq j \leq k$. Ignore the vertex $v_{n+1}^j$ and apply the remaining ordering to $T$. Each edge that has to be removed is one of the two edges $(v_{i_1}^j,w_1^j,2)$ and $(v_{i_2}^j,w_2^j,2)$, where $t_{i_1} = t_1^j$ and $t_{i_2} = t_2^j$, and where $v_{i_2}^j <_\pi v_{i_1}^j$ and $w_1^j <_\pi w_2^j$. This implies for our solution to $x$ the $j^{th}$ inequality has $t_{i_2} < t_{i_1}$, not satisfying the inequality $t_{i_1} < t_{i_2}$. On the other hand, if it holds for the edges $(v_{i_1}^j,w_1^j,2)$ and $(v_{i_2}^j,w_2^j,2)$ that $v_{i_1}^j <_\pi v_{i_2}^j$, this implies the inequality is satisfied.  

Conversely, suppose we are given an ordering of the list $T$ which has $\ell$ inequalities not satisfied. Apply the same ordering to $v_1^j, \hdots, v_n^j$ for $1 \leq j \leq k$ and let $v_{n+1}^j$ be the highest ordered vertex for that group. This causes $\ell$ violations involving only pairs of light edges. We can remove one edge from each pair and obtain a (perhaps sub-optimal) solution where $\ell$ edges are removed.  
\end{proof}

\begin{lemma}
\label{lem:C-approx}
Given an instance $x$ of FAS, a C-approximation to the solution OPT($f(x)$) yields a C-approximation to the solution OPT($x$).
\end{lemma}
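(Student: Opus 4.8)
The plan is to lift the value-preserving correspondence of Lemma~\ref{lem:reduction_scores} from individual solutions to optima and then to approximation ratios. First I would show that the two optimal values coincide. On one hand, the canonical ordering depicted in Figure~\ref{fig:C-approx_red} is a feasible WGV solution removing at most $k$ edges, so $val(\mathrm{OPT}(f(x))) \le k$; applying the forward direction of Lemma~\ref{lem:reduction_scores} to an optimal WGV solution (which thus has at most $k$ violating edges) produces a FAS ordering of equal value, giving $val(\mathrm{OPT}(x)) \le val(\mathrm{OPT}(f(x)))$. On the other hand, any FAS instance has only $k$ inequalities, so $val(\mathrm{OPT}(x)) \le k$, and the converse direction of Lemma~\ref{lem:reduction_scores} turns an optimal FAS ordering into a WGV solution of the same value, giving $val(\mathrm{OPT}(f(x))) \le val(\mathrm{OPT}(x))$. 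Hence the optima are equal, and I would write $\mathrm{OPT} := val(\mathrm{OPT}(x)) = val(\mathrm{OPT}(f(x)))$.

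Next I would take a $C$-approximate WGV solution for $f(x)$, say one removing $\ell$ edges with $\ell \le C\cdot \mathrm{OPT}$, and argue that we may assume $\ell \le k$ so that Lemma~\ref{lem:reduction_scores} applies. The key observation is that the $k$-edge canonical solution of Figure~\ref{fig:C-approx_red} is always available, so if $\ell > k$ we simply replace the solution by this one, setting $\ell' := \min(\ell,k) \le k$. This substitution never increases the edge count, and in the case where it changes the solution we have $k < \ell \le C\cdot \mathrm{OPT}$, so the approximation bound $\ell' \le C\cdot \mathrm{OPT}$ is preserved. With $\ell' \le k$ secured, I would invoke the forward direction of Lemma~\ref{lem:reduction_scores} to convert the WGV solution into a FAS ordering violating exactly $\ell'$ inequalities, yielding $\ell' \le C\cdot \mathrm{OPT} = C\cdot val(\mathrm{OPT}(x))$, which is precisely a $C$-approximation to $\mathrm{OPT}(x)$.

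I expect the only genuinely delicate point to be the bookkeeping around the hypothesis $\ell \le k$ of Lemma~\ref{lem:reduction_scores}: the value-preserving correspondence is only asserted for solutions with at most $k$ violations, so I must explicitly truncate an over-large approximate solution down to the canonical $k$-edge one and verify that the ratio survives the truncation (it does, precisely because the truncation only activates when $C\cdot\mathrm{OPT}$ already exceeds $k$). Everything else is a direct transfer through the bijection of Lemma~\ref{lem:reduction_scores}, so no further analysis of the graph construction is required.
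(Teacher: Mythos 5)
Your proof is correct and follows the same route as the paper's, which simply invokes Lemma~\ref{lem:reduction_scores} in a single sentence to transfer a solution of value $C \cdot val(\mathrm{OPT}(f(x)))$ for $f(x)$ back to a FAS ordering of the same value. Your version is in fact more careful than the paper's: you explicitly establish $val(\mathrm{OPT}(x)) = val(\mathrm{OPT}(f(x)))$ using both directions of Lemma~\ref{lem:reduction_scores}, and you handle the $\ell \le k$ hypothesis of that lemma by truncating an over-large approximate solution to the canonical $k$-edge one, both of which the paper leaves implicit.
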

\begin{proof}
By Lemma \ref{lem:reduction_scores} any  (sub-optimal) solution with objective value $C \cdot val($OPT$(f(x)))$ to $f(x)$, gives us a  (sub-optimal) solution to $x$ with the same objective value, $C\cdot val($OPT$(x))$. 
\end{proof}
Theorem \ref{theorem:APX-hard}  follows from Lemma \ref{lem:C-approx} and 
Theorem~\ref{theorem:C-approx_Wheeler} follows from 
Lemma \ref{lem:C-approx} and Lemma \ref{lemma:C-approx_MFAS}.

\section{The Wheeler Subgraph Problem is in APX}
\label{sec:MWS}
\begin{wrapfigure}[17]{r}{.4\textwidth}
\begin{center}
    \includegraphics[width=.3\textwidth]{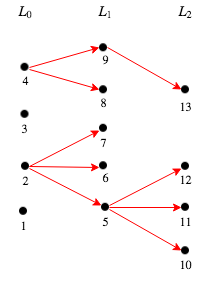}
    \caption{Arborescences have their roots aligned in level $L_0$. The relative ordering for each type of vertex can be read from top to bottom, left to right.}
    \label{fig:arborescences}
\end{center}
\end{wrapfigure}

The dual problem to WGV is the problem of finding the largest subgraph of $G$ which is a Wheeler graph. This problem (defined in Section \ref{sec:problem_def}) is called Wheeler Subgraph, or WS. Unlike WGV, this problem yields a $\Theta(1)$-approximate solution for constant $\sigma$.

We first prove the result for $\sigma=1$. The proof uses a branching of a directed graph. A branching is a set of arborescence where
an arborescence is a directed, rooted tree where all edges point away from the root. 
A branching is spanning in that every vertex in $V$ is included exactly one arborescence in the branching. 


\begin{lemma} \label{lem:14}
There exist a linear time $\Theta(1)$-approximation algorithm for WS when the alphabet set size $\sigma$ is one.
\end{lemma}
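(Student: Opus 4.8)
The plan is to return the better of two cheap candidate subgraphs, so I first record the tool I will lean on throughout: for $\sigma=1$ a Wheeler graph is exactly a $1$-queue DAG (the identification behind Theorem~\ref{theorem:linear_sigma_1}). Fixing the Wheeler order $\pi$ as a topological order, condition (ii) becomes $u<_\pi u' \implies v\le_\pi v'$, so a violation is precisely a pair of edges with distinct endpoints forming a \emph{nesting} $u<_\pi u'<_\pi v'<_\pi v$. The algorithm I would analyze computes a maximum branching $B_{\mathrm{out}}$ of $G$ (a spanning forest of out-arborescences, one incoming edge chosen per vertex, kept acyclic) and a maximum branching of the reverse graph $G^{R}$, which corresponds to a forest $B_{\mathrm{in}}$ of \emph{in}-arborescences of $G$; self-loops (which on their own always satisfy the axioms, being placed last) are collected as a third candidate, and the algorithm outputs the candidate with the most edges. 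I would then combine the upper bound $\mathrm{OPT}=O(n)$ from Theorem~\ref{theorem:num_edges} with a matching structural lower bound on these candidates.

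First I would verify that every branching is a Wheeler graph, which is exactly what the leveled drawing in Figure~\ref{fig:arborescences} encodes. Order the vertices by BFS level from the roots (all roots in $L_0$), breaking ties so that the children of each vertex form one contiguous block ordered by the position of their parent. Then every edge joins consecutive levels, and within a level-pair children are grouped by parent; a short case check shows that two edges never nest (same-level-pair edges because children are grouped by parent order, different-level-pair edges because their spans are disjoint), while the roots have in-degree $0$ in the subgraph and come first, so $\pi$ is proper. Running the identical argument on reverse-BFS levels shows that a forest of in-arborescences is also a Wheeler graph, so $B_{\mathrm{out}}$ and $B_{\mathrm{in}}$ are both feasible outputs.

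The heart of the proof, and the step I expect to be the main obstacle, is the combinatorial inequality: \emph{every simple $\sigma=1$ Wheeler graph with $m$ proper edges has $s+t\ge m$}, where $s$ and $t$ count the vertices of positive out- and in-degree respectively. I would prove this by induction on $m$, deleting the largest vertex $N$ in the order, which is a target receiving edges from sources $a_1<_\pi\cdots<_\pi a_\delta$. Since the edge $(a_1,N)$ may nest over nothing, no $a_i$ with $i\ge 2$ can carry another out-edge $(a_i,d)$ (it would give the nesting $a_1<_\pi a_i<_\pi d<_\pi N$); hence deleting $N$ removes exactly one target and at least $\delta-1$ sources, and the induction closes to give $s+t\ge m$, so $\max(s,t)\ge m/2$. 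Applying this to an optimal solution $W$ finishes the ratio: $W$ is itself a DAG, so its own out-branching is a branching of $G$ with $t(W)$ edges, giving $|B_{\mathrm{out}}|\ge t(W)$, and symmetrically $|B_{\mathrm{in}}|\ge s(W)$; therefore $\max(|B_{\mathrm{out}}|,|B_{\mathrm{in}}|)\ge\max(s(W),t(W))\ge m(W)/2$.

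Finally I would dispose of the loose ends. Self-loops cannot appear in a branching, so I account for them through the third candidate: writing $\mathrm{OPT}=m^{*}+\ell^{*}$ (proper edges plus self-loops of $W$), the output is at least $\max(m^{*}/2,\ \ell^{*})\ge \tfrac14(m^{*}+\ell^{*})=\mathrm{OPT}/4$, a constant factor that can be tightened by folding the self-loops into a branching. A maximum branching, the reverse-graph branching, the BFS leveling, and the self-loop count are all computable in linear time, and the leveling simultaneously serves as the certifying Wheeler ordering. Together these steps give the claimed linear-time $\Theta(1)$-approximation for $\sigma=1$.
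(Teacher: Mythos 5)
Your combinatorial lemma $s+t\ge m$ is correct and the induction on the topologically last vertex does go through, but the algorithm built on it has a genuine feasibility gap: a forest of in-arborescences is \emph{not} in general a Wheeler graph, so the candidate $B_{\mathrm{in}}$ (and even the specific in-forest obtained by keeping one out-edge per positive-out-degree vertex of the optimum $W$) may be infeasible. The obstruction is exactly the clause you elide when you identify $\sigma=1$ Wheeler graphs with $1$-queue DAGs: the in-degree-zero vertices must be placed \emph{first}, and in an in-arborescence those are the leaves, which sit at arbitrary depths. Concretely, take the in-arborescence with edges $p_0\to p_1\to p_2\to p_3$ and $q\to p_3$. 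The pairs $\{(p_0,p_1),(p_1,p_2)\}$ and $\{(p_1,p_2),(p_2,p_3)\}$ force $p_1<_\pi p_2<_\pi p_3$, the sources-first rule forces $q<_\pi p_1$, and then $(q,p_3)$ and $(p_1,p_2)$ form a rainbow, so no proper ordering exists. Moreover this graph is precisely the in-forest your analysis would extract from the Wheeler graph obtained by adding the edge $(p_0,q)$, so the failure occurs exactly where you need feasibility. The gap is not cosmetic: dropping $B_{\mathrm{in}}$ leaves only $|B_{\mathrm{out}}|\ge t(W)$, and on an in-star ($n-1$ sources all pointing at one sink, which is a Wheeler graph) we have $t(W)=1$ while $\mathrm{OPT}=n-1$, an $\Omega(n)$ ratio. (A smaller issue: a maximal branching need not be maximum, and a genuinely linear-time maximum-cardinality branching computation is not immediate.)

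The paper avoids both problems by never comparing against $s(W)$ and $t(W)$: it invokes the absolute bound $\mathrm{OPT}=O(n)$ from Theorem~\ref{theorem:num_edges} and shows one can always output a feasible subgraph with at least $n/2$ edges, splitting on the number of sources $|V_0|$. When $|V_0|\le n/2$, an out-branching rooted at the sources has $n-|\mathcal{F}|\ge n/2$ edges (your leveled argument for out-forests matches the paper's here); when $|V_0|>n/2$, it keeps one out-edge per source, producing a forest of \emph{depth-one} in-stars, which are Wheeler precisely because all their tails are in-degree-zero vertices that can be placed first, ordered by the rank of their targets. If you want to rescue your scheme, replace $B_{\mathrm{in}}$ by such a star forest or switch to the paper's absolute counting; your inequality $s+t\ge m$ then survives as a nice self-contained proof that $\mathrm{OPT}\le 2n$ for $\sigma=1$, but it cannot carry the algorithm on its own.
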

\begin{proof}
Let $V_0$ be the set of sources in $G$ (vertices with in-degree zero). There are two cases:

\noindent
\textbf{Case: $|V_0| \leq n/2$:} Take a branching $\mathcal{F}$ of the input graph $G$ such that each non-source vertex than zero is included in some non-singleton arborescence whose root is a source vertex in $V_0$. 
Let $|\mathcal{F}|$ denote the total number of arborescences in $\mathcal{F}$. Since $|V_0| \leq n/2$, it follows that $|\mathcal{F}| \leq n/2$ as well.

We create a planar leveling $(L_0, L_1, \hdots)$ of $\mathcal{F}$ by aligning all roots of the branching on level $L_0$ in arbitrary order. Then set $L_i$ to be the vertices which are distance $i$ from some root in $L_0$. Because these are trees, we can order the vertices in levels in such a way that the leveling is planar (and for the purpose of visualization say left to right as in Figure \ref{fig:arborescences}).

We claim that $\mathcal{F}$ is a Wheeler graph and that we can obtain a proper ordering $\pi$ for the vertices of $\mathcal{F}$ from this leveling. Starting with $V_0$, we order the vertices on each level from the bottom to top before proceeding right to the next level. One can check that the Wheeler graph axioms are satisfied.


The number of edges in $\mathcal{F}$, denoted $e(\mathcal{F})$, is equal to $n - |\mathcal{F}|$. And, since $|\mathcal{F}| \leq n/2$, we have that $e(\mathcal{F}) \geq n/2$. At the same time, by Theorem \ref{theorem:num_edges} the optimal number of edges, denoted $|E^*|$, is $\Theta(n)$. The the ratio of the optimal solution value over the branching solution value is bounded. In particular, $|E^*|/e(\mathcal{F}) \leq \Theta(n)/(n/2) = \Theta(1)$. The construction of the branching, the planar leveling, and the extracting $\pi$ can all be done in linear time.

\vspace{3mm}
\noindent
\textbf{Case $|V_0| > n/2$:} Take one outbound edge from each vertex in $V_0$. 
We obtain a Wheeler graph with $|V_0| > n/2$ edges. This gives us an approximation ratio of $ |E^*|/|V_0| < \Theta(n)/(n/2) = \Theta(1)$.

\vspace{2mm}
\noindent
In either case, we have an approximate solution with $\tilde{e}$ edges where $\tilde{e} \in \Theta(|E^*|)$. 
\end{proof}

\noindent
Now, we consider when $\sigma > 1$. Suppose $G^* = (V,E^*)$ is the optimal solution for $G$. Then $E^* = E_1^* \cup E_2^* \hdots E_\sigma^*$ where $E_k^* = \{(u,v,k) \in E^* \}$. Let $G_k = (V,E_k)$ where $E_k = \{(u,v,k) \in E \}$ and  let $G'_k = (V, E_k')$ be the optimal solution for $G_k$. Then, since $|E_k^*| \leq |E_k'|$ we have 
$$
|E^*| = \sum_{k=1}^\sigma |E_k^*| \leq \sigma \cdot \max_{k}|E_k^*| \leq \sigma \cdot \max_{k}|E_k'|. $$

Applying the result for $\sigma = 1$ (Lemma~\ref{lem:14}), we can approximate $\max_{k} |E_k'|$ with a solution having $\tilde{e} = \alpha \cdot \max_{k} |E_k'|$ edges for some constant $\alpha \leq 1$. 
Therefore, 
$$
\frac{\alpha}{\sigma}|E^*| \leq \alpha \max_k |E_k'| = \tilde{e} \leq \max_{k} |E_k'| \leq |E^*|.
$$
So the solution proves $\Omega(1/\sigma)$-approximation for $G$ as well.

\begin{theorem}
There exist a linear time $\Omega(1/\sigma)$-approximation algorithm for WS.
\end{theorem}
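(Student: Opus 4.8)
The plan is to bootstrap the single-letter result of Lemma~\ref{lem:14} to an arbitrary alphabet by a label-decomposition argument. First I would split $G$ along edge labels, writing $G_k = (V, E_k)$ with $E_k = \{(u,v,k) \in E\}$ for each $k \in \{1,\dots,\sigma\}$; each $G_k$ carries only one distinct label and so is an instance of WS over an alphabet of size one, on which Lemma~\ref{lem:14} applies in linear time.

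The crux is the averaging bound. Writing $G^* = (V,E^*)$ for an optimal Wheeler subgraph and $E_k^* = \{(u,v,k)\in E^*\}$, the optimum decomposes as $E^* = \bigcup_{k=1}^{\sigma} E_k^*$. Each $(V,E_k^*)$ is a single-label Wheeler subgraph of $G_k$, so $|E_k^*| \le |E_k'|$ where $G'_k=(V,E_k')$ is the optimum for $G_k$; a pigeonhole step then yields $|E^*| = \sum_k |E_k^*| \le \sigma \max_k |E_k^*| \le \sigma \max_k |E_k'|$. I would then run the linear-time routine of Lemma~\ref{lem:14} on every $G_k$ to obtain a single-label solution with $\alpha \cdot |E_k'|$ edges for the fixed constant $\alpha \le 1$ supplied by that lemma, and return the subgraph achieving the maximum over $k$. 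This output has $\tilde e = \alpha \max_k |E_k'| \ge (\alpha/\sigma)\,|E^*|$ edges, giving the claimed $\Omega(1/\sigma)$ ratio, and the total work is $\sigma$ linear-time calls plus a maximum, hence linear.

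The step that needs the most care — and the point I would spell out rather than assert — is feasibility of the returned solution together with the inequality $|E_k^*| \le |E_k'|$. Both rest on the observation that once all retained edges share a single label, Wheeler axiom~(i) becomes vacuous and axiom~(ii) is preserved under restriction of any proper ordering: restricting a proper ordering of $G^*$ to the label-$k$ edges cannot introduce a rainbow, so $(V,E_k^*)$ is genuinely a Wheeler graph and is dominated by the optimum $E_k'$; symmetrically, the subgraph we output uses only one label and is therefore a valid Wheeler subgraph of $G$. I expect no further obstacles, since everything else is the arithmetic already displayed before the theorem statement.
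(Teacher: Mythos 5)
Your proposal is correct and follows essentially the same route as the paper: decompose $G$ by edge label, apply the single-label approximation of Lemma~\ref{lem:14} to each $G_k$, return the best, and use the pigeonhole bound $|E^*| \leq \sigma \max_k |E_k'|$ to get the $\Omega(1/\sigma)$ ratio. Your extra care in justifying feasibility and the inequality $|E_k^*| \leq |E_k'|$ (axiom~(i) vacuous for one label, axiom~(ii) closed under edge deletion) is a welcome addition that the paper leaves implicit.
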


\section{An Exponential Time Algorithm}
\label{sec:exp_time}
We can apply the encoding introduced by Gagie \textit{et al.} \cite{DBLP:journals/tcs/GagieMS17} to develop exponential time algorithms to solve all the problems listed so far. The idea is to enumerate over all possible encodings of Wheeler graphs with the proper number of vertices, edges, and labels, checking whether the encoding is isomorphic with the given graph. This idea exploits that having such a space efficient encoding also implies have a limited search space of Wheeler graphs. The graph isomorphism can be checked efficiently enough to maintain the desired time complexity. The results are summarized in the following two theorems, proven in Appendix \ref{appendix:exp_time}.

\begin{theorem}
\label{theorem:exp_algorithm}
Recognizing whether $G =(V,E)$ is a Wheeler graph can be done in time 
$2^{e\log\sigma+O(n+e)}$, 
where $n = |V|$, $e=|E|$, and $\sigma$ is the size of the edge label alphabet.
\end{theorem}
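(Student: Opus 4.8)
The plan is to bound the number of distinct Wheeler graphs on a fixed vertex/edge/label profile by exploiting the compact encoding of Gagie \emph{et al.}, then to reduce recognition to a graph isomorphism test against each candidate. First I would recall from the introduction that any Wheeler graph $G=(V,E)$ with $n=|V|$, $e=|E|$, and alphabet size $\sigma$ admits an encoding of $2(e+n)+e\log\sigma+\sigma\log e+o(n+e\log\sigma)$ bits. The decisive observation is that the number of distinct bit strings of this length is $2^{e\log\sigma+O(n+e)}$ (the $\sigma\log e$ and lower-order terms are absorbed into the $O(n+e)$ and the $e\log\sigma$ terms, since $\sigma\le e$ in any connected instance). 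Hence the set $\mathcal{W}$ of Wheeler graphs realizable with these parameters has cardinality at most $2^{e\log\sigma+O(n+e)}$, and each member can be decoded from its encoding in time polynomial in $n+e$.

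The algorithm then enumerates every candidate encoding of the prescribed length, decodes it into an edge-labeled directed graph $H$, and tests whether $H$ is isomorphic to the input $G$ as edge-labeled directed graphs. If any candidate is isomorphic to $G$, we answer `YES'; otherwise `NO'. Correctness is immediate: $G$ is a Wheeler graph if and only if it is isomorphic to some graph in $\mathcal{W}$, because being a Wheeler graph is invariant under relabeling of vertices and the encoding ranges over exactly the Wheeler graphs on these parameters.

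For the running time, the enumeration contributes the factor $2^{e\log\sigma+O(n+e)}$, i.e.\ one iteration per candidate encoding. Within each iteration I must decode (polynomial in $n+e$) and perform one isomorphism test. To keep the isomorphism test within budget, I would reduce the edge-labeled directed graph isomorphism problem to undirected (uncolored) graph isomorphism by a standard gadget reduction — encoding edge directions and the $\lceil\log\sigma\rceil$ label bits via small attached gadgets — which inflates the instance size only by a factor of $O(\log\sigma)$, so the transformed instances still have $O((n+e)\log\sigma)$ vertices. Invoking the subexponential bound on undirected graph isomorphism (as stated in the introduction, isomorphism is testable in strictly subexponential time) gives per-test cost $2^{o(n+e)}$ in the relevant regime, which is dominated by the enumeration factor. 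Multiplying the number of candidates by the per-candidate cost yields the claimed bound $2^{e\log\sigma+O(n+e)}$.

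I expect the main obstacle to be the bookkeeping in the second and third paragraphs: verifying that the auxiliary terms in the encoding length ($\sigma\log e$ and the $o(\cdot)$ terms) genuinely fold into $e\log\sigma+O(n+e)$ under the standing assumption that every vertex is reachable from a source (so $\sigma\le e$ and $n\le e+1$), and that the gadget reduction to undirected isomorphism neither breaks the isomorphism correspondence nor blows up the vertex count enough to spoil the subexponential bound. Once those two estimates are pinned down, combining the enumeration count with the per-instance decode-and-test cost is routine.
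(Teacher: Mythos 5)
Your proposal is correct and follows essentially the same route as the paper's proof: enumerate all candidate encodings (the paper uses the $O$, $I$, $L$ bit vectors of total length $2(e+n)+e\log\sigma$, giving $|S|\le 2^{2(e+n)+e\log\sigma}$ candidates), decode each into a candidate Wheeler graph, and test label-preserving isomorphism against $G$ via a gadget reduction to undirected graph isomorphism solved in subexponential time. The only cosmetic difference is the gadget design (the paper attaches paths of length depending on $k$ and $\sigma$ rather than $\lceil\log\sigma\rceil$ label bits), which does not affect the final bound.
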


\begin{theorem}
\label{theorem:exp_algorithm_2}
The WGV problem and WS problem for an input $G=(V,E)$ with $n = |V|$, $e = |E|$ and $\sigma$ is the size of the edge label alphabet
can be solved in time $2^{e\log\sigma + O(n + e)}$.
\end{theorem}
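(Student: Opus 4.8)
The plan is to reduce both optimization problems to repeated calls of the recognition routine of Theorem~\ref{theorem:exp_algorithm}, exploiting that WGV and WS are exact complements of one another. First I would observe that if $E'' \subseteq E$ is a set of edges whose retention yields a Wheeler graph $(V, E'')$, then $E' = E \setminus E''$ is precisely a set of edges whose removal yields that same Wheeler graph, and $|E'| = e - |E''|$. Hence maximizing $|E''|$ (the WS objective) is identical to minimizing $|E'|$ (the WGV objective), so a single optimal spanning Wheeler subgraph simultaneously solves both problems. It therefore suffices to compute, among all spanning subgraphs of $G$ that are Wheeler graphs, one with the maximum number of edges, and then report its complement as the WGV solution.

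To find such an optimal subgraph, I would enumerate all $2^e$ subsets $E'' \subseteq E$. For each candidate I form the subgraph $(V, E'')$ and invoke the recognition algorithm of Theorem~\ref{theorem:exp_algorithm} to decide whether it is a Wheeler graph, retaining the largest $E''$ that passes the test. Because the recognition algorithm is used purely as a black box, no stronger primitive (such as a spanning subgraph isomorphism test) is required; the plain recognition routine together with the complementarity observation suffices.

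For the running time, each subgraph $(V, E'')$ has $n$ vertices, at most $e$ edges, and labels drawn from an alphabet of size at most $\sigma$, so by Theorem~\ref{theorem:exp_algorithm} each recognition test costs at most $2^{e\log\sigma + O(n+e)}$. Multiplying by the $2^e$ candidate subsets gives $2^e \cdot 2^{e\log\sigma + O(n+e)} = 2^{e\log\sigma + O(n + e)}$, matching the claimed bound.

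The step I expect to require the most care is precisely this absorption of the enumeration factor into the exponent: verifying that trying all edge subsets does not worsen the asymptotics. This is exactly what makes the result go through, since the enumeration contributes only an \emph{additive} term $e$ to the exponent rather than a multiplicative blow-up, and $e = O(n+e)$, so the total cost of examining every spanning subgraph is dominated by the cost of one recognition call on $G$ itself.
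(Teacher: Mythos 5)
Your proposal is correct and matches the paper's own argument: the paper likewise enumerates all $2^e$ edge subsets, runs the recognition algorithm of Theorem~\ref{theorem:exp_algorithm} on each induced subgraph, and absorbs the extra factor of $2^e$ into the $O(n+e)$ term of the exponent. The observation that WGV and WS are solved by the same optimal spanning Wheeler subgraph is also exactly how the paper treats the two problems together.
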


\section{Open Problems}


\begin{itemize}
    \item Is the Wheeler graph recognition problem NP-complete for 3-NFA and 4-NFA?
    \item For which other classes  of graphs can Wheeler graph recognition be done efficiently?
    \item Is there a fixed parameter tractable exponential time algorithm for any of the problems given in this paper?
\end{itemize}
Constructive answers to these questions would likely contribute to our knowledge about how to find an ordering of the vertices "close" to that required by the Wheeler graph axioms. As a result, it could aid in our ability to apply BWT based indices to various structures, as well as our ability to find useful compressible subgraphs.

\vspace{1mm}
\noindent{\bf Acknowledgement:}
We thank Travis Gagie and Nicola Prezza for introducing this problem to us. 


\bibliography{references}
\newpage
\appendix

\section{Proof of Lemma \ref{lem:3}} \label{appendix:proof}

First assume there exists an ordering of $T$ that satisfies the given triples. Place $v_0$ first in the ordering and then order the vertices $v_1^j, \hdots, v_n^j$ in the same way starting from index $(j-1)n + 1$. Order the vertices $w_1^j$, $w_2^j$,and  $w_3^j$  in the same relative order given to $v_{i_1}^j$, $v_{i_2}^j$, and $v_{i_3}^j$ starting from index $kn + 3(j-1) + 1$, where $t_{i_1} = t_1^j$, $t_{i_2} = t_2^j$, and $t_{i_3} = t_3^j$. Axiom (i) of the Wheeler graph axioms is clearly satisfied.

For the second axiom, we first show that the edges with label one satisfy Axiom (ii). It suffices to show that if one tail position of an edge minus another tail position is positive, then the subtraction of the corresponding head positions is non-negative. Consider any two edges $(v_i^j, v_i^{j+1},1)$ and $(v_p^q, v_p^{q+1},1)$. Vertex $v_i^j$ is at position $(j-1)n + i$ and vertex $v_i^{j+1}$ is at position $jn + i$. Similarly vertex $v_p^q$ is at position $(q-1)n +p$ and vertex $v_p^{q+1}$ is at position $qn + p$. Without loss of generality we assume that $j < q$. Then 
$$(q-1)n + p - ((j-1)n + i) = (q - j)n + p - i  >  0$$ and also $$qn + p - (jn + i) = (q-j)n + p - i > 0.$$

For the edges with label two, we only need to consider edges with the same index $j$. This is since if $j < q$ for the edges $(v_{i_1}^j, w_{\ell_1}^j, 2)$ and $(v_{i_2}^q, w_{\ell_2}^q, 2)$. Subtracting the tail positions we have
$$(q-1)n + i_2 - ((j-1)n + i_1) = (q-j)n + i_2 - i_1 \geq n - 2 > 0$$
and subtracting the head positions we have
$$kn + \ell_2 + 3(q-1) - (kn + \ell_1 + 3(j-1)) = \ell_2 - \ell_1 + 3(q-j) \geq \ell_2 - \ell_1 + 3 > 0.$$

For edges with label two and the same index $j$ we first suppose the ordering on the elements of the triple is $t_1^j < t_2^j < t_3^j$. Consider the vertices $v_{i_1}^j$ such that $t_{i_1} = t_1^j$, vertex $v_{i_2}^j$ such that $t_{i_2} = t_2^j$, and vertex $v_{i_3}^j$ such that $t_{i_3} = t_3^j$. We have then $v_{i_1}^j <_\pi v_{i_2}^j <_\pi v_{i_3}^j$ and $w_1^j <_\pi w_2^j <_\pi w_3^j$. We show Axiom (ii) holds for the edges
$$(v_{i_1}^j,w_1^j, 2), (v_{i_2}^j, w_2^j, 2), (v_{i_3}^j, w_3^j, 2), (v_{i_1}^j, w_2^j, 2), (v_{i_3}^j, w_2^j, 2).$$
It holds since: 

\begin{enumerate}
\setlength\itemsep{.5em}
    \item For edges $(v_{i_1}^j,w_1^j, i)$ and $(v_{i_1}^j, w_2^j, 2)$, $v_{i_1}^j =_\pi v_{i_1}^j$.
    \item For the edges $(v_{i_1}^j,w_1^j, 2)$ and $(v_{i_2}^j, w_2^j, 2)$, $v_{i_1}^j <_\pi v_{i_2}^j$ and $w_1^j <_\pi w_2^j$.
    \item For the edges $(v_{i_1}^j, w_1^j, 2)$ and $(v_{i_3}^j, w_2^j, 2)$, $v_{i_1}^j <_\pi v_{i_3}^j$ and $w_1^j <_\pi w_2^j$.
    \item For the edges $(v_{i_1}^j,w_1^j, 2)$ and $(v_{i_3}^j, w_3^j, 2)$, $v_{i_1}^j <_\pi v_{i_3}^j$ and $w_1^j <_\pi w_3^j$.
    \item For the edges $(v_{i_1}^j, w_2^j, 2)$ and $(v_{i_2}^j, w_2^j, 2)$, $w_2^j =_\pi w_2^j$.
    \item For the edges $(v_{i_1}^j, w_2^j, 2)$ and $(v_{i_3}^j, w_2^j, 2)$, $w_2^j =_\pi w_2^j$.
    \item For the edges $(v_{i_1}^j, w_2^j, 2)$ and $(v_{i_3}^j, w_3^j, 2)$, $v_{i_1}^j <_\pi v_{i_3}^j$ and $w_2^j <_\pi w_3^j$.
    \item For the edges $(v_{i_2}^j, w_2^j, 2)$ and $(v_{i_3}^j, w_2^j, 2)$, $w_2^j =_\pi w_2^j$.
    \item For the edges $(v_{i_2}^j, w_2^j, 2)$ and $(v_{i_3}^j, w_3^j, 2)$, $v_{i_2}^j <_\pi v_{i_3}^j$ and $w_2^j <_\pi w_3^j$.
    \item For the edges $(v_{i_3}^j, w_2^j, 2)$ and $(v_{i_3}^j, w_3^j, 2)$, $v_{i_3}^j =_\pi v_{i_3}^j$.
\end{enumerate}
\bigskip
The argument for when $t_3^j < t_2^j < t_1^j$ is similar.

Conversely, suppose a valid ordering $\pi$ exists for the vertices of the directed graph. Take as the ordering on $T$ the ordering of $v_1^1, \hdots, v_n^1$. We claim that for the elements of the triple $(t_1^j, t_2^j, t_3^j)$ that either $t_1^j < t_2^j < t_3^j$ or $t_3^j < t_2^j < t_1^j$. 

The vertex $v_0$ must be first in the ordering since it has in-degree zero. The vertices $v_i^1$ must precede any vertices $v_i^j$ with $j \geq 2$. We show for a fixed $j \geq 2$ the only valid ordering of $v_i^j$  for $1 \leq i \leq n$ is the same as the ordering of $v_i^1$ for $1 \leq i \leq n$. Let $v_i^j$ with $j \geq 2$ be the vertex with smallest position in the ordering that violates this. Then $v_{i-1}^j$ must be higher in the ordering then $v_i^j$ and the edges $(v_{i-1}^{j-1}, v_{i-1}^j, 1)$  and $(v_{i}^{j-1}, v_i^j, 1)$ violate Axiom (ii).

Next, suppose for the sake of contradiction that some ordering other than $t_1^j < t_2^j < t_3^j$ or $t_3^j < t_2^j < t_1^j$ happens in the ordering. In all cases the ordering of $w_1^j$, $w_2^j$, and $w_3^j$ must be the same as the relative ordering of $v_{i_1}^j$, $v_{i_2}^j$, and $v_{i_3}^j$, otherwise for some $s \neq t$ we have the edges $(v_{i_s}^j,w_s^j, 2)$ and  $(v_{i_t}^j,w_t^j, 2)$ where $v_{i_s}^j <_\pi v_{i_t}^j$ and $w_t^j <_\pi w_s^j$ (a transposition). Therefore,

\begin{enumerate}
\setlength\itemsep{.5em}
    \item If the ordering is $t_1^j < t_3^j < t_2^j$, then $v_{i_1}^j <_\pi v_{i_3}^j <_\pi v_{i_2}^j$ and $w_1^j <_\pi w_3^j <_\pi w_2^j$ implying the edges $(v_{i_1}^j, w_2^j, 2)$ and $(v_{i_3}^j, w_3^j, 2)$ violate Axiom (ii) ($v_{i_1}^j <_\pi v_{i_3}^j$ and $w_3^j <_\pi w_2^j$).
    \item If the ordering is $t_3^j < t_1^j < t_2^j$, then $v_{i_3}^j <_\pi v_{i_1}^j <_\pi v_{i_2}^j$ and $w_3^j <_\pi w_1^j <_\pi w_2^j$ implying the edges $(v_{i_3}^j, w_2^j, 2)$ and $(v_{i_1}^j, w_1^j, 2)$ violate Axiom (ii)($v_{i_3}^j <_\pi v_{i_1}^j$ and $w_1^j <_\pi w_2^j$).
    \item If the ordering is $t_2^j < t_1^j < t_3^j$, then $v_{i_2}^j <_\pi v_{i_1}^j <_\pi v_{i_3}^j$ and $w_2^j<_\pi w_1^j <_\pi w_3^j$ implying the edges $(v_{i_1}^j, w_1^j, 2)$ and $(v_{i_3}^j, w_2^j, 2)$, violate Axiom (ii) ($v_{i_1}^j <_\gamma v_{i_3}^j$ and $w_2^j <_\gamma w_1^j$).
    \item If the ordering is $t_2^j < t_3^j < t_1^j$, then $v_{i_2}^j <_\pi v_{i_3}^j <_\pi v_{i_1}^j$ and $w_{2}^j <_\pi w_{3}^j <_\pi w_{1}^j$ implying the edges $(v_{i_3}^j, w_3^j, 2)$ and $(v_{i_1}^j, w_2^j, 2)$, violate Axiom (ii) ($v_{i_3}^j <_\gamma v_{i_1}^j$ and $w_2^j <_\gamma w_3^j$).
\end{enumerate}
\vspace{2mm}
In all cases Axiom (ii) is violated. Hence, it must hold that the triple is ordered as $t_1^j < t_2^j < t_3^j$ or $t_3^j < t_2^j < t_1^j$.
This completes the proof. 

\section{Wheeler graph Recognition for $\sigma = 1$ in Linear Time}
\label{appendix:queue_number}
\subsection{Queue Number}
The concept of queue number and queue layout were introduced by Heath and Rosenberg~\cite{DBLP:journals/siamcomp/HeathR92}. The definition of queue number for directed graphs used in~\cite{DBLP:journals/siamcomp/HeathPT99} requires that we be able to process the edges so that every time the tail of an edge is encountered the edge is enqueued, and every time the head of an edge is encountered the edge is dequeued. The minimum number of queues necessary to do this is the queue number.  A directed graph with queue number one is characterized by the fact that there exists a topological ordering on the vertices which allows for processing the edges in the way described using one queue. Similar to our problem, the challenge in identifying one-queue DAGs is in identifying if an ordering on the vertices exists to make processing the edges in this way possible. The problem of detecting whether a graph is a one-queue DAG was shown to be solvable in linear time by Heath and Pemmaraju~\cite{DBLP:journals/siamcomp/HeathP99, DBLP:journals/siamcomp/HeathPT99}.

\subsection{Proof of Theorem \ref{theorem:linear_sigma_1}}
We ignore self-loops since they must be placed last in a proper ordering. We distinguish between an ordering of $V$ which satisfies the Wheeler graph axioms and one which allows for edge processing with one queue as a Wheeler ordering and one-queue ordering respectively. When $\sigma = 1$, any proper Wheeler ordering is also a topological ordering (see Property \ref{property:block_split}), hence, the problem of finding a one-queue ordering and a Wheeler ordering are almost equivalent. The only difference is that for a Wheeler ordering all of the vertices with in-degree zero must be placed first in the ordering. We can overcome this difference and apply an algorithm which detects one-queue DAGs if we first create a new vertex $u$ with in-degree zero. Let $V_0 \subset V$ represent all vertices in $V$ with in-degree zero. Add an edge from $u$ to each vertex in $V_0$. Since a valid one-queue ordering is a topological ordering, $v_0$ must be first in the one-queue ordering. Moreover, any vertices in the $V - V_0$ must be in the one-queue ordering after the last position given to a vertex in $V_0$, otherwise a rainbow is created. Thus, the above modification ensures that the only acceptable one-queue orderings on $V$ place the vertices in $V_0$ before any vertices in $V- V_0$, ensuring the ordering is a Wheeler ordering.

\subsection{Proof of Theorem \ref{theorem:num_edges}}
Ignoring self-loops, for $\sigma = 1$ every Wheeler graph is also a one-queue DAG. A result by Dujmovic and Wood implies that the total number of edges is $\Theta(n)$~\cite{DBLP:journals/dmtcs/DujmovicW04}. The addition of self-loops adds at most $n$ edges.

\section{Exponential Time Algorithms}
\label{appendix:exp_time}

\subsection{Proof of Theorem \ref{theorem:exp_algorithm}}
Before describing the algorithm that proves Theorem \ref{theorem:exp_algorithm} we need to describe the encoding of a Wheeler graph given in~\cite{DBLP:journals/tcs/GagieMS17}. A Wheeler graph can be completely specified by three bit vectors. Two bit vectors $I$ and $O$ both of length $e+n$ and a bit vector $L$ of length $e\log\sigma$. We assume that the vertices of the Wheeler graph $G$ are listed in a proper ordering $x_1 <_\pi x_2 <_\pi \hdots <_\pi x_n$. The array $I$ is of the form $0^{\ell_1}10^{\ell_2}1\hdots 0^{\ell_n}1$ and $O$ is of the form $0^{k_1}10^{k_2}1\hdots 0^{k_n}1$. Here $\ell_i$ is the out-degree of $x_i$ whereas $k_i$ is the in-degree of $x_i$. The array $L$ indicates which of the $e$ character symbols are assigned to each edge. Specifically, the $i^{th}$ character in $L$ gives us the label of the edge corresponding to the $i^{th}$ zero in $O$. In~\cite{DBLP:journals/tcs/GagieMS17} an additional $C$ array is added, and these arrays are equipped with additional rank and select structures to allow for efficient traversal as is done in the FM-index~\cite{DBLP:journals/jacm/FerraginaM05}. For our purposes, however, the arrays $O$, $I$, and $L$ are adequate. 

The outline of the algorithm is given below as Algorithm \ref{algorithm: id_wheeler}. It essentially enumerates all bit vectors of a given length, checks whether or not the bit vector encodes a valid Wheeler graph, and if so then checks whether the encoding matches our given graph $G$. Let $S$ represent the set of all possible encodings we wish to check. Note that $|S| \leq 2^{2(e+n) + e\log\sigma}$.

\bigskip
\begin{algorithm} 
    \begin{algorithmic}
        \caption{IdentifyWheelerGraph($G$)} \label{algorithm: id_wheeler} 
        \ForAll{$(O,I,L) \in S$}
            \If{$(O,I,L)$ defines a valid wheeler graph $G'$}
                \State convert $G$ to undirected graph $\alpha(G)$
                \State convert $G'$ to undirected graph $\alpha(G')$
                \If{$\alpha(G)$ and $\alpha(G')$ are isomorphic}
                    \State \Return 'Wheeler Graph'
                \EndIf
            \EndIf
        \EndFor
        \State \Return ``Not a Wheeler Graph"
    \end{algorithmic}
\end{algorithm}
\bigskip

\begin{wrapfigure}{r}{0.45\textwidth}
 \vspace{-2em}
\begin{center}
    \includegraphics[width=.45\textwidth]{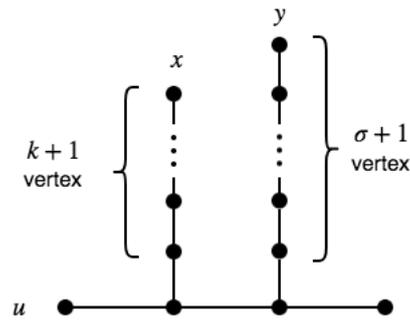}
    \caption{A $k$-gadget replacing directed labeled edge $(u,v,k)$.}
    \label{fig:gadget}
    \vspace{-2em}
\end{center}
\end{wrapfigure}

Next, we provide the details for Algorithm \ref{algorithm: id_wheeler}. The Wheeler graph corresponding to the encoding can be extracted by working from right to left reading the array $I$. For each zero in $I$, we know which symbol should be on the inbound edge going into the corresponding vertex. We only need to decide where the edge's tail was. Let $k$ be the edge label and $j$ be the index of the label $k$ in $L$ which is furthest to the right in $L$ and yet to be used. If no such $j$ exists we reject the encoding.  When assigning the tail for an edge, take as the tail the vertex $x_i$ where $i = rank_1(O, select_0(O,j))$. We call the graph constructed in this way $G'$.

We now wish to check whether $G'$ and $G$ are the same graphs only with a reordering of the vertices, that is, $G'$ is the result of applying an isomorphism to $G$. Unlike the typical isomorphism for labeled graphs, where a bijection between the symbols on the edge alphabet is all that is required, here we wish for the adjacency \textbf{and} the label on the edge to be preserved in the mapping between $G$ and $G'$. Specifically, we wish to know if there exists a bijective function $f: V(G) \to V(G')$, such that if $u, v \in V(G)$ are adjacent via an edge $(u,v,k)$ with label $k$ in $G$, then $f(u)$ and $f(v)$ are also adjacent via an edge $(f(u),f(v),k)$ with label $k$ in $G'$. Using ideas similar to those presented by Miller in ~\cite{DBLP:journals/jcss/Miller79}, this problem can be reduced in polynomial time to checking whether two undirected graphs are isomorphic.

\begin{lemma}
Checking whether the direct edge labeled graph $G'$ is edge label preserving isomorphic to $G$ can be reduced in polynomial time to checking if two undirected graphs are isomorphic.
\end{lemma}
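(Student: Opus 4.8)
The plan is to encode directed, edge-labeled adjacency and label information into the local structure of an undirected graph, so that an undirected isomorphism is forced to respect both edge direction and edge labels. The standard trick (essentially Miller's) is to replace each directed labeled edge by a small rigid \emph{gadget} that is asymmetric enough to pin down orientation and label, and then argue the correspondence between edge-label-preserving directed isomorphisms of $G, G'$ and ordinary isomorphisms of the resulting undirected graphs $\beta(G), \beta(G')$.

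Concretely, first I would fix the gadget. For a directed edge $(u,v,k)$ I replace it by a path-like gadget (as in Figure~\ref{fig:gadget}) attached to $u$ and $v$ whose internal structure breaks the $u$-$v$ symmetry and simultaneously records the integer $k \in \{1,\dots,\sigma\}$. For example, one can subdivide the edge and hang a distinguishing ``tag'' off the subdivision vertex: a short pendant path of one length to mark the tail side versus the head side (encoding direction), together with a pendant path whose length is a strictly increasing function of $k$ (encoding the label). Because these tags have distinct lengths/degrees, no automorphism of the gadget can swap the tail endpoint with the head endpoint, and no automorphism can map a gadget for label $k$ onto a gadget for label $k' \neq k$. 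To keep the original vertices distinguishable from gadget vertices, I would also attach a uniform marker (say a pendant path of a reserved length longer than any tag) to every vertex of $V$, ensuring that any undirected isomorphism must send original vertices to original vertices and gadget vertices to gadget vertices.

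The main steps are then: (1) define $\beta(\cdot)$ and verify the construction is polynomial in $n$, $e$, and $\sigma$ (each gadget has $O(\sigma)$ vertices, so $|\beta(G)| = O(n + e\sigma)$); (2) prove the forward direction, that an edge-label-preserving directed isomorphism $f : V(G) \to V(G')$ extends to an undirected isomorphism $\hat f : \beta(G) \to \beta(G')$ by mapping each gadget to the corresponding gadget; and (3) prove the reverse direction, that any undirected isomorphism $g : \beta(G) \to \beta(G')$ restricts to a bijection on original vertices that preserves adjacency, direction, and label. Step (3) is where the gadget's rigidity does the work: the vertex markers force $g$ to respect the partition into original and gadget vertices, and the length/degree asymmetries of each gadget force $g$ to carry a $k$-labeled edge from $u$ to $v$ onto a $k$-labeled edge from $g(u)$ to $g(v)$ with the direction preserved.

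I expect the reverse direction to be the main obstacle, since it requires ruling out ``unintended'' undirected isomorphisms that do not come from any directed map---in particular, isomorphisms that might try to reverse a gadget, relabel it, or map a gadget's internal vertices onto the marker attached to an original vertex. The way to close this is a careful bookkeeping argument showing that the chosen pendant-path lengths are pairwise distinct across all three roles (tail-tag, head-tag, and vertex-marker) and across all $\sigma$ label values, so that every structural ambiguity is eliminated; degree and distance invariants preserved by any isomorphism then pin down $g$ uniquely on the gadgets. Once this rigidity lemma is established, both implications follow, and since undirected graph isomorphism is the only nontrivial subroutine, the reduction is polynomial time, completing the proof.
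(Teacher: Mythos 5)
Your proposal is correct and follows essentially the same route as the paper: replace each directed edge $(u,v,k)$ by a Miller-style gadget whose pendant paths of distinct lengths encode both the orientation and the label $k$, extend a directed label-preserving isomorphism to the gadgets for the forward direction, and use distance/degree invariants to show any undirected isomorphism must map $k$-gadgets to $k$-gadgets and original vertices to original vertices for the reverse direction. The only cosmetic difference is your extra uniform marker on original vertices, which the paper avoids by observing that the distance profile of an original vertex already distinguishes it from gadget-internal vertices.
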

\begin{proof}
Define the transformation $\alpha$ from directed edge labeled graph $G$ to undirected graph $\alpha(G)$ as follows: For every directed edge $(u,v,k)$ replace it with the $k$-gadget in Figure \ref{fig:gadget}. 

Assume that there exists an edge label preserving isomorphism $f$ from $V(G)$ to $V(G')$. This implies that when $\alpha$ is applied to $G'$ the same gadget is used to replace the edge $(f(u),f(v),k)$ as the gadget used to replace the edge $(u,v,k)$ in $G$. Therefore, the function $f$ can be naturally extended to an isomorphism $\tilde{f}$ on the vertices of $\alpha(G)$ providing an isomorphism between $\alpha(G)$ and $\alpha(G')$.

Now, consider the case where $g$ is an isomorphism between $\alpha(G)$ and $\alpha(G')$. We wish to show that $G$ and $G'$ must be related by an isomorphism preserving edge labels. We define a $n$-tuple of numbers for each vertex $v \in V(\alpha(G))$, $\beta(v) = (a_1, a_2, \hdots, a_n)$ where $a_i$ is the number of vertices with graph distance (the number of edges) $i$ from $v$. Notice first that $\beta(v) = \beta(g(v))$, that is $\beta(v)$ is invariant under $g$. In Figure \ref{fig:gadget} $\beta(x) = (1,1,\hdots, 1, 2, \hdots)$ where the leading 1's are repeated $k+1$ times. Also, $\beta(y) = (1,1,\hdots,1,2,\hdots)$ where the leading 1's are repeated $\sigma+1$ times. For example, when $k = 1$, we have $\beta(x) = (1,1,2,\hdots)$. Now observe that for any vertex $u \in V(G)$ of degree $d$ we have that $\beta(u) = (d,2d,\hdots)$. It follows that any vertex which is a $x$-type vertex of a $k$-gadget is mapped by $g$ onto an $x$-type vertex of a $k$-gadget. Similarly, any vertex which is a $y$-type vertex of a $k$-gadget is mapped by $g$ onto $y$-type vertex of a $k$-gadget. Hence, $k$-gadgets are mapped by $g$ onto $k$-gadgets. This also implies that vertices in $V(\alpha(G))$ originally in $G$ are mapped by $g$ onto vertices in $V(\alpha(G'))$ which were originally in $V(G')$. If we restrict $g$ to only the vertices originally in $V(G)$, then $g$ provides us with an isomorphism between $G$ and $G'$.
The reduction clearly takes polynomial time.
\end{proof}

The final step in this algorithm is to check whether $\alpha(G)$ and $\alpha(G')$ are isomorphic. Using well established techniques this can be done in time $2^{\sqrt{n'}+O(1)}$ where $n'$ is the number of vertices in $\alpha(G)$~\cite{DBLP:conf/stoc/BabaiL83}.
The total time complexity of Algorithm \ref{algorithm: id_wheeler} is the number of bit strings tested, multiplied by the time it takes to (1) validate whether the bit string encodes a Wheeler graph $G'$ and decode it, (2) convert $G$ and $G'$ to undirected graphs $\alpha(G)$ and $\alpha(G')$, and (3) test whether $\alpha(G)$ and $\alpha(G')$ are isomorphic.
This yields an overall time complexity of 
$|S| n^{O(1)} 2^{\sqrt{n+2e(\sigma + 1)} +O(1)}$, i.e., $2^{e\log\sigma + O(n + e)}$
for Algorithm \ref{algorithm: id_wheeler}.

This also gives us an exponential time algorithm for identifying the minimum number of edges that need to be removed to obtain a Wheeler graph, solving both the WGV and WS, along with obtaining a solution's corresponding encoding. Iterate over all possible subsets of edges in $E$, 
take the corresponding induced subgraph and apply Algorithm \ref{algorithm: id_wheeler}. The solution to both problems is the encoding with the fewest edges removed. The resulting time complexity is the same as the above with the addition of one $e$ term in the exponent. This proves Theorem \ref{theorem:exp_algorithm_2}.

\section{A Class of Graphs with Linear Time Solution for Recognition}
\label{appendix:special_case}
It is interesting to consider which special cases of this problem can be solved efficiently. We identify two characteristics which make this problem tractable with techniques similar to those used to detect one-queue DAGs. This may describe some useful subset of acyclic NFA's where the transition function is total. It may also be used to guide the search for a Wheeler subgraph by removing edges until the conditions are satisfied. 

We make two definitions which describe the characteristics we require in order to solve the problem efficiently.
\begin{definition}
We consider a graph $G$ to have \emph{full spectrum outputs} if for every vertex $v$ of out-degree greater than zero every label appears on an edge leaving from $v$. 
\end{definition}

\begin{definition}
A graph $G$ has the \emph{unique string traversal property} if for every two sets of vertices $S_1$, $S_2$ there is either a unique string $s$, or no string, such that if we traverse from $S_1$ processing the string $s$ we arrive at $S_2$.
\end{definition}

Here we provide a linear time solution for the special case where the graph has full spectrum outputs and the unique string traversal property. Note that if $G$ has the unique string traversal property then $G$ must be acyclic and thus contains some vertices with in-degree zero. Let $V_0$ refer to the set of vertices with in-degree zero. Before presenting the solution, we introduce an essential data structure, as well as the process by which we can detect whether a DAG has a queue number of one.

\subsection{PQ-trees}
PQ-trees where introduced by Booth and Lueker for the purpose of solving the consecutive ones problem~\cite{booth1975pq}, and have since found applications in a wide range of problems including planarity detection, detecting interval graphs, and graph embedding~\cite{booth1975pq, DBLP:journals/jcss/ChibaNAO85}. PQ-trees represent a set of possible orderings of the leaves which are subject to certain constraints. These constraints specify that some subset of the leaves must be contiguous in the ordering. The trees are made up of three types of nodes, p-nodes, q-nodes, and leaves. The p-nodes allow for arbitrary permutations of their child nodes, whereas q-nodes only allow for the reversal of their child nodes. The leaves represent the actual elements whose ordering we are interested in. See \ref{fig:pqTree} for an example.

A universal PQ-tree is a p-node $v$ where all of the leaves are $v$'s children. The $\epsilon$-tree, $T_\epsilon$ is a special tree which represents the empty set of orderings. We can take the intersection of two PQ-trees in time proportional to the sum of the two tree sizes~\cite{booth1975pq}. The resulting PQ-tree represents the intersection of the orderings represented by each PQ-tree. Deletion of a leaf can be done in constant time.

\begin{wrapfigure}{r}{0.4\textwidth}
\begin{center}
    \includegraphics[width=.3\textwidth]{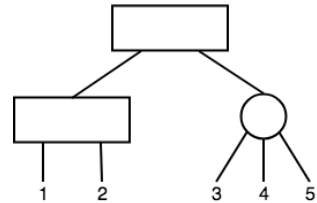}
    \setlength{\belowcaptionskip}{-15pt}
    \caption{In the figure, p-nodes are represented by circles and q-nodes by rectangles. The orderings represented by this PQ-Tree are orderings where 1 can be reversed with 2, the leaves 3,4,5 can be permuted arbitrarily, and the sets 1,2 and 3,4,5, can be swapped.}
    \label{fig:pqTree}
\end{center}
\end{wrapfigure}

\subsection{Detecting One-Queue DAGs}
The problem of detecting whether a directed graph has queue number one can be solved in linear time, but the solution is non-trivial. It consists of taking a leveling of the DAG $(V_1, \hdots, V_k)$. Beginning with the universal PQ-tree whose leaves are $V_1$, we then "grow" the leaves of the PQ-tree to be the vertices in $V_2$ according to adjacency. Then the leaves which should be in correspondence in $V_2$ are merged into the same leaf. If at any point the merging step fails, we obtain the $\epsilon$-tree and conclude that the DAG does not have queue number one. If we get to the final level without a merging step returning the $\epsilon$-tree, the DAG has a queue number of one. Details of the algorithm are given in~\cite{DBLP:journals/siamcomp/HeathP99,DBLP:journals/siamcomp/HeathPT99}. For convenience, we will call the combined steps of growing and merging from one level to the next \emph{pushing}. Pushing a PQ-tree $T$  to the next level with vertices $V$ is denoted pseudocode as \textproc{Push}($T, V$). The intuition behind this procedure is that when the level-$k$ has been pushed to, the PQ-tree captures all possible orderings of $V_k$ such that a one queue layout of levels one through $k$ is possible if one of these orderings was fixed. This interpretation of the process is very useful for understanding the algorithm presented next.

\subsection{Linear Time Solution}
The basic approach to solving this problem is to use a depth-first search, treating sets of vertices as a single vertex. These vertex sets will have PQ-trees pushed across them in a similar fashion as was done in \cite{DBLP:journals/siamcomp/HeathPT99}. The situation is slightly more complicated here as we have multiple edge types. This results in a tree structure, rather than a path of vertex sets. We will label the vertices representing vertex sets with capital letters. We label the PQ-tree for a vertex set $V$ as $T_V$.

For simplicity, we split the algorithm into two parts. The first part is to create a tree where vertex sets play the role of vertices. It is a depth-first search using the edges between neighborhoods as connecting edges. The pseudocode is given in Algorithm \ref{algorithm: createNeighborhood}. $N_i(V)$ denotes the set of neighbors of the set $V$ connected by an edge with label $i$. The function \Call{createVertex}{} takes a set of vertices and creates a new instance of a vertex class which can maintain pointers to its parent, children, internal vertices, and a string. Lemma \ref{lem:blocking} can be proven by applying induction to the number of edge labels, $\sigma$. 
\vspace{1em}
\begin{algorithm}
     \begin{algorithmic}[1]
         \caption{CreateNeighborhoodGraph} \label{algorithm: createNeighborhood}
        \Require Vertex set $V$ with adjacency information
        \Function{CreateNeighborhoodGraph}{$V$}:
         \ForAll{$i \in [\sigma]$}
             \If{$N_i(V) \neq \emptyset$}
                 \State $V_i \gets \Call{createVertex}{N_i(V)}$
                 \State$V_i.parent \gets V$
                 \State $V_i.string \gets i \ || \ V.string$ \Comment{Concatenate}
                 \State $V.children.\Call{add}{\Call{CreateNeighborhoodGraph}{V_i}}$
             \EndIf
         \EndFor
         \State \Return $V$
         \EndFunction
     \end{algorithmic}
\end{algorithm}

Thanks to Lemma \ref{lem:blocking}, we only need to determine the relative ordering within each vertex set.
\begin{lemma}
\label{lem:blocking}
If the given graph G is a Wheeler graph, in a proper ordering, the vertex sets obtained as above are ordered by the lexicographical ordering of their strings.
\end{lemma}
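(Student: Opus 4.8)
The plan is to show that the strings labeling the vertex sets uniquely determine their position in any proper ordering, and that this position is exactly the lexicographic rank of the string. Recall that each vertex set $V$ produced by \textproc{CreateNeighborhoodGraph} carries a string $V.string$ equal to the sequence of edge labels traversed from the root set $V_0$ (the sources) to reach $V$; by the unique string traversal property this string is well-defined, and by the full spectrum outputs assumption every label is available at every non-sink vertex, so the neighborhood tree is generated consistently. The claim is that if $G$ is a Wheeler graph with proper ordering $\pi$, then for two vertex sets $U$ and $V$ with strings $s_U$ and $s_V$, we have $s_U <_{\text{lex}} s_V$ if and only if every vertex of $U$ precedes every vertex of $V$ in $\pi$.

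First I would establish path coherence as the main tool: the excerpt already records that starting from a consecutive range of vertices and following the characters of a string $P$ lands on a consecutive range. I would apply this to the source set $V_0$, which must occupy an initial consecutive block of $\pi$ since all in-degree-zero vertices come first in a proper ordering. Traversing $V_0$ along a string $s$ then yields exactly the set of vertices reachable from a source by reading $s$, and by the unique string traversal property this set is precisely one of the vertex sets, namely the set whose label is $s$. Hence each vertex set is a consecutive range under $\pi$, and I would proceed by induction on the string length: the base case is $V_0$ itself, and the inductive step pushes from a set $V$ with label $s$ to its children $N_i(V)$ with labels $i\,\|\,s$.

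The heart of the argument is the ordering between two distinct sets. I would induct on the length of the longest common prefix of $s_U$ and $s_V$, or equivalently argue directly from the two Wheeler axioms. If $s_U$ and $s_V$ first differ in the leading (most recent) character, say $U$ was reached by an edge labeled $k$ and $V$ by an edge labeled $k'$ with $k<k'$, then Axiom (i) forces every vertex of $U$ to precede every vertex of $V$, matching the lexicographic comparison (note the strings are read most-recent-label-first, consistent with the prepending in line $V_i.string \gets i\,\|\,V.string$). If instead the leading characters agree, say both equal $k$, then $U$ and $V$ are obtained by following the same label $k$ from parent sets $U'$ and $V'$ whose strings already compare lexicographically by the inductive hypothesis, and Axiom (ii) propagates that ordering forward: if every vertex of $U'$ precedes every vertex of $V'$, then every vertex of $U$ is ordered at or before every vertex of $V$. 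Combined with the fact that distinct sets are disjoint consecutive ranges, this upgrades ``$\leq$'' to strict ``$<$'' between the blocks.

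The main obstacle I anticipate is the boundary case of Axiom (ii), which only guarantees $v \le_\pi v'$ rather than strict inequality; I must rule out the blocks of $U$ and $V$ interleaving or sharing a vertex. Here I would lean on the unique string traversal property to guarantee the sets are genuinely disjoint (so no shared vertex can blur the boundary) and on the consecutiveness established via path coherence to conclude that two disjoint consecutive ranges ordered by ``$\le$'' at every pair must in fact be completely separated, with one block entirely preceding the other. A secondary subtlety is confirming the direction of lexicographic comparison given that strings are built by prepending the newest label, so that the most significant character of the comparison corresponds to the edge closest to $U$ and $V$ — precisely the edge governed by the Wheeler axioms at that step; I would state this convention explicitly to avoid an off-by-one in the induction.
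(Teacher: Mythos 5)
Your argument is correct, and it is considerably more detailed than what the paper provides: the paper never actually proves this lemma, remarking only that it ``can be proven by applying induction to the number of edge labels, $\sigma$.'' Your route is genuinely different --- you induct on the strings themselves (equivalently on the length of their common prefix), disposing of the first differing character with Axiom (i), propagating the order of the parent sets through a shared leading character with Axiom (ii), and anchoring the base case in the rule that in-degree-zero vertices come first. This localizes the argument to exactly the two axioms and makes the prepending convention (most recent label is most significant) do visible work, whereas an induction on $\sigma$ would presumably peel off whole label classes via Axiom (i) and recurse, which is harder to make precise for the mixed-label strings arising here. Two small refinements. First, your worry about upgrading $\leq_\pi$ to $<_\pi$ dissolves more easily than you suggest: once the unique string traversal property gives disjointness of the sets, $u \leq_\pi v$ together with $u \neq v$ already forces $u <_\pi v$ in a total order, so neither path coherence nor consecutiveness is needed for that step (consecutiveness of each block then follows as a corollary of pairwise separation plus the fact that the sets cover all vertices). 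Second, state the prefix case explicitly: when stripping common leading characters exhausts one of the two strings, the corresponding ancestor is $V_0$, and it is the ``sources first'' rule rather than Axiom (i) that places it before the other ancestor, since Axiom (i) says nothing about vertices with no inbound edges; your base case covers this, but it deserves to be named as a case of the induction.
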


An example of a tree obtained from Algorithm \ref{algorithm: createNeighborhood} is shown in Figure \ref{fig:tree}. The vertex sets are disjoint due to the unique string traversal property. It can be easily shown that all vertices are included in some vertex set. Also, during Algorithm \ref{algorithm: createNeighborhood} we can identify if the graph satisfies the unique string traversal property by checking if a vertex gets included into two vertex sets.

Moving forward, the main algorithm is a recursive procedure which starts with the set of vertices $V_0$ which have in-degree zero. 
The pseudocode for this procedure is given in Algorithm \ref{algorithm: propagatePQTrees}. The first step removes vertices in $V$ with out-degree zero and the corresponding leaves from $T_V$. This is necessary since when we push a PQ-tree back up to $V$, these vertices will not be leaves in the resulting PQ-tree, making computing the intersection in future steps impossible. Hereafter, we consider $V$ as containing no degree zero vertex. Let $\overline{V}$ be the vertices processed prior to reaching $V$. We assume that all of the PQ-tree's we see are not the $\epsilon$-tree, otherwise, we know the graph is not a Wheeler graph. 
\begin{wrapfigure}{r}{0.4\textwidth}
\begin{center}
    \includegraphics[width=.4\textwidth]{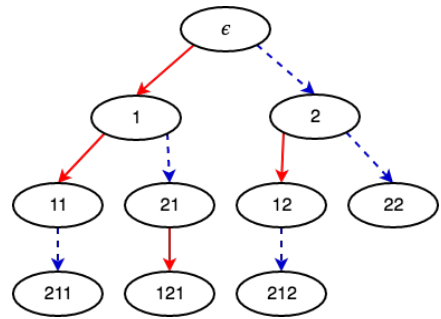}
    \caption{Tree resulting from Algorithm \ref{algorithm: createNeighborhood}. An oval corresponds to a set of vertices in $G$.}
    \label{fig:tree}
\end{center}
\end{wrapfigure}
We assume inductively that the PQ-tree $T_V$ represents all orderings of $V$ such that if we fixed any one of these orderings there still exists a proper ordering of the vertices in $\overline{V}$. After performing the first line of the first for-loop, the PQ-tree $T_{V_1}$  represents all orderings of $V_1$ such that if we fixed any one of these orderings there still exists a proper ordering of the vertices in $\overline{V} \cup V$. After performing the second line in the first for-loop, $T_{V}$ represents all orderings of $V$ such that if we fixed any one of these orderings there still exists a proper ordering of the vertices in $\overline{V} \cup V \cup V_1$. After repeating this loop a second time, $T_{V}$ represents all orderings of $V$ such that if we fixed any one of them there still exists a proper ordering of the vertices in $\overline{V} \cup V \cup V_1 \cup V_2$. We push $T_V$ down to $V$'s children in the last for-loop. When finally pushed, both $T_{V_i}$ represents all orderings of $V_i$ such that fixing an ordering still allows for a valid ordering of all vertices encountered so far. 

The full spectrum output condition is necessary to apply this algorithm. Every vertex in $V$ maps onto some vertex in each of $V$'s children. As a result, when the PQ-tree $T_{V_i}$ gets pushed back from a child $V_i$ creating a new PQ-tree $T$, all the vertices in $V$ are leaves in $T$.

The pseudocode for the whole algorithm is given in Algorithm \ref{algorithm: detectWheeler}.

\bigskip
\begin{algorithm}[ht!]
    \begin{algorithmic}[1]
        \caption{Propagating PQ-Trees}
        \label{algorithm: propagatePQTrees}
        \Require PQ-Tree $T_V$ and corresponding vertex set $V$.
        \Function{PropagatePQTrees}{$T_V, V$}:
        \State Remove in-degree zero vertex from $V$ and $T_V$
        \ForAll{$V_i \in V.children$}
            \State $T_{V_i} \gets \Call{push}{T_V, V_i}$\label{line:push_down}
            \Comment{Push PQ-Tree down to child.}
            \State $T_V \gets T_V \cap \Call{push}{T_{V_i}, V}$
            \Comment Push PQ-tree up from child and take intersection
            \If{$T_V = T_\epsilon$}
                \State \Return "failure"
            \EndIf
        \EndFor 
        \ForAll{$V_i \in V.children$}\label{line:push_up}
            \State $T_{V_i} \gets \Call{push}{T_V, V_i}$
            \Comment{Push intersected PQ-Tree down}
            \State $result \gets  \Call{PropagatePQTrees}{T_{V_i}, V_i}$
            \Comment{Recursively apply to children}
            \If{$result = $ "failure "}
                \State \Return "failure" 
            \EndIf
        \EndFor 
        \State \Return "success"
        \EndFunction
    \end{algorithmic}
\end{algorithm}

\bigskip
\begin{algorithm}[ht!]
    \begin{algorithmic}[1]
        \caption{Detecting Wheeler graphs} \label{algorithm: detectWheeler}
        \Require Graph full spectrum graph $G = (V, E)$ with unique string traversal property.
        \Function{DetectWheelerGraph}{G}:
        \State Let $V_0$ denote the set of all degree zero vertex in $G$
        \State $V_0$ = \Call{createVertex}{$V_0$}
        \State \Call{createNeighborhoodGraph}{$V_0$}
        \State $V_0.string \gets ``\epsilon"$
        \State Let $T_{V_0}$ be the universal tree with leaves $V_0$
        \If{\Call{propagatePQTrees}{$V_0, T_{V_0}$} = "success"}
            \State \Return "Wheeler graph"
        \Else
            \State \Return "not a Wheeler graph"
        \EndIf
        \EndFunction
    \end{algorithmic}
\end{algorithm}

\vspace{1mm}
\noindent{\bf Time Complexity:}
Each set of edges between two vertex sets has PQ-trees pushed across it three times. These pushes can be done in time proportional to the number of edges. All intersections can be done in time proportional to the number of vertices. As a result, the algorithm can be performed in linear time. We have demonstrated the following:
\begin{theorem}
It can be determined in linear time if a directed edge labeled graph with full spectrum outputs and the unique string traversal property is a Wheeler graph. 
\end{theorem}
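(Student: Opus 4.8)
The plan is to prove both the correctness and the linear running time of the combined procedure \textsc{DetectWheelerGraph} (Algorithm~\ref{algorithm: detectWheeler}), which first builds the tree of vertex sets via \textsc{CreateNeighborhoodGraph} (Algorithm~\ref{algorithm: createNeighborhood}) and then runs \textsc{PropagatePQTrees} (Algorithm~\ref{algorithm: propagatePQTrees}). The first step is a \emph{reduction of the recognition problem to a collection of intra-block ordering problems}. Because the unique string traversal property forces $G$ to be acyclic and assigns to each vertex a unique string labeling its traversal from $V_0$, the vertex sets produced by Algorithm~\ref{algorithm: createNeighborhood} are pairwise disjoint and cover $V$, and they form a rooted tree whose label-$i$ edges connect a set $V$ to its neighborhood $N_i(V)$. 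I would then invoke Lemma~\ref{lem:blocking}: in any proper ordering these vertex sets occupy contiguous blocks arranged in lexicographic order of their strings (this is where path coherence is doing the work). Consequently, being a Wheeler graph reduces to deciding whether one can pick an internal ordering of each block so that, label by label, no rainbow (Property~\ref{property:forbidden}) is created and the in-degree-zero vertices sit first.

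The core of the proof is to establish the PQ-tree invariant announced in the text: after \textsc{PropagatePQTrees} has acted on a block $V$, the tree $T_V$ represents \emph{exactly} those orderings of $V$ that extend to a proper ordering of all previously processed vertices $\overline V$. I would prove this by induction along the DFS, with base case the universal tree on $V_0$. For the inductive step, the key observation is that the label-$i$ edges between $V$ and its child $V_i$ constitute a single ``level transition'' in the sense of the one-queue layout machinery of Appendix~\ref{appendix:queue_number}: the no-rainbow condition restricted to label $i$ is precisely the one-queue non-crossing condition, so the \textsc{push} operation (grow-and-merge) faithfully encodes that condition as PQ-tree consecutiveness constraints. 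Thus pushing $T_V$ down along label $i$ and back up transfers exactly the feasibility constraints label $i$ imposes between $V$ and $V_i$. The first for-loop intersects the constraints contributed by the distinct immediate children; this is sound and complete because constraints arising from different labels are independent conditions on the same ordering of $V$, and PQ-tree intersection computes their conjunction. The second for-loop then pushes the refined $T_V$ down and recurses, incorporating deeper constraints. The \emph{full spectrum output} hypothesis is exactly what makes the upward push well defined: since every vertex of $V$ emits an edge of every label, each vertex of $V$ is the tail of some label-$i$ edge and therefore reappears as a leaf when $T_{V_i}$ is pushed back to $V$ — this is also why out-degree-zero vertices must be stripped from $V$ and $T_V$ at the start.

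With the invariant in hand I would conclude that \textsc{DetectWheelerGraph} returns ``success'' iff no \textsc{push} ever collapses a tree to $T_\epsilon$, iff every block admits a simultaneously feasible internal ordering, iff $G$ is a Wheeler graph. I expect the \textbf{main obstacle} to be the completeness (``gluing'') direction of the invariant on a \emph{tree} rather than a path: I must argue that mere local feasibility — no PQ-tree becoming $T_\epsilon$ — actually assembles into one global proper ordering, even though a block both receives constraints from several children and feeds a constraint upward. My intended argument is that, since Lemma~\ref{lem:blocking} already fixes the between-block order and each $T_V$ represents all internal permutations admissible under the relevant reversals and permutations, any choice consistent with every tree edge can be built greedily in DFS order, the invariant guaranteeing extendability at each step. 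Finally, the running time follows as in the complexity paragraph: each inter-block edge set is crossed by a \textsc{push} at most three times at cost linear in its edges, and each intersection costs linear in the number of leaves, for a total of $O(n+e)$.
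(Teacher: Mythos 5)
Your proposal follows essentially the same route as the paper's own argument in Appendix~\ref{appendix:special_case}: the same block decomposition via \textsc{CreateNeighborhoodGraph} and Lemma~\ref{lem:blocking}, the same inductive PQ-tree invariant maintained by \textsc{PropagatePQTrees}, the same use of full spectrum outputs to make the upward push well defined, and the same three-pushes-per-edge-set accounting for linear time. The one place you go beyond the paper is in explicitly flagging the gluing/completeness step on a tree of blocks as the main obstacle --- the paper asserts the invariant without dwelling on this --- and your greedy DFS-order extendability argument is a reasonable way to close that gap.
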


\end{document}